\newtheorem{thm}{Theorem}
\newtheorem{rem}{Remark}
\newtheorem{lem}{Lemma}
\newtheorem{cor}{Corollary}
\newenvironment{proof}{ {\it{Proof.}} }{}
\def\@email#1#2{%
 \endgroup
 \patchcmd{\titleblock@produce}
  {\frontmatter@RRAPformat}
  {\frontmatter@RRAPformat{\produce@RRAP{*#1\href{mailto:#2}{#2}}}\frontmatter@RRAPformat}
  {}{}
}%
\begin{document}

\preprint{AIP/123-QED}

\title[]{Tracking the vortex motion by using Brownian fluid particles}
\author{Zhongmin Qian}
\email{qianz@maths.ox.ac.uk}
\affiliation{Mathematical Institute, University of Oxford, OX2 6GG, England}
\altaffiliation[Also at ]{Oxford Suzhou Centre for Advanced Research}
 
\author{Youchun Qiu}
\affiliation{ 
Institut de Math\'ematiques de Toulouse, UMR 5219, Universit\'e
de Toulouse, CNRS, UPS, F-31062, Toulouse Cedex 9, France.
}%

\author{Yihuang Zhang}
\affiliation{Mathematical Institute, University of Oxford, OX2 6GG, England}
 \altaffiliation[Also at ]{Oxford Suzhou Centre for Advanced Research}
 
\date{\today}

\begin{abstract}
In this paper we propose a simple yet powerful vortex method to numerically
approximate the dynamics of an incompressible flow. The idea is to
sample the distribution of the initial vortices of the fluid flow
in question then follow vortex dynamics along Taylor's Brownian fluid
particles. The weak convergences of this approximation scheme are obtained
for 2D and 3D cases, though only for small time in 3D case. Based on our
method, the simulation results are quite attracting. 
\end{abstract}

\maketitle

\section{\label{sec:level1}Introduction}
Numerical methods have become important components in the study of
fluid dynamics, in particular for modeling turbulence flows. With
the advance of computational power DNS (Direct Numerical Simulation),
LES (Large Eddy Simulation) and other new technologies have been developed
in recent years for solving the Navier-Stokes equations numerically.
Among them, various vortex methods which are based on the vorticity transport
equation have become an attractive approach for simulating fluid flows
in particular turbulent flows. One may find a comprehensive account
in the monographs Cottet and Koumoutsakos\cite{cottet2000vortex}, Majda and Bertozzi\cite{majda2002vorticity}, Saffman\cite{saffman1992vortex}, Ting and Knio\cite{ting2007vortex},
and the recent review Mimeau and Mortazavi\cite{Mimeau-Mortazavi2021}.

The idea of vortex methods was introduced in Chorin\cite{chorin1973numerical}.
Several vortex approximation procedures and convergence results have
been established for 2D flows, see for example Anderson et  al.\cite{Anderson-Greengard1985,beale1982vortex,Hald1978,Hald1979,Hald1987}
and the literature therein. For 3D vortex methods of inviscid fluid flows,
the convergence with Lagrangian stretching was proved in Beale and Majda\cite{beale1982vortex}.
A different approach which updates the vorticity through the velocity
field was proposed and the corresponding convergence result was shown
in Beale\cite{Beale1986}. For viscous fluid flows, the vortex dynamics is replaced
by a random dynamical system in which a Brownian motion term is added
to the equation of motion of the fluid particles, and the fluid particles
become Brownian fluid particles. It\^{o}'s stochastic differential
equations (SDEs) take place of ordinary differential equations (ODEs).
A remarkable convergence result for 2D random vortex method has been
established in Long \cite{long1988convergence}. For 3D viscous fluid flows
however, to the best of authors' knowledge, there is no satisfactory
solution yet so far.

In the present paper, we propose a simple method of tracking the vortex
dynamics of an incompressible fluid which gives surprisingly satisfactory
simulations for both inviscid and viscous fluid flows. Our method
is to introduce the vorticity evolution directly in terms of Brownian
fluid particles specified by a set of stochastic differential equations.
We do not evolve the Taylor diffusion by mollifying the Biot-Savart
kernel as in the traditional vortex methods, but instead we sample
the distribution of the initial vortices and develop the initial distribution
according to the SDEs determined by the vorticity equation. Let us
describe this approach in more detail and at the same time establish
the notations we will use throughout the paper.

Let $u=(u^{1},u^{2},u^{3})$ denote the velocity of an incompressible
fluid flow moving in a range without boundary constraint. Hence the
velocity $u(x,t)$ satisfies the equations of motion, the Navier-Stokes
equations 
\begin{equation}
\frac{\partial}{\partial t}u^{i}+u^{j}\frac{\partial}{\partial x^{j}}u^{i}=\nu\Delta u^{i}-\frac{\partial}{\partial x^{i}}p,\quad\frac{\partial}{\partial x^{k}}u^{k}=0\label{eq:nseq1}
\end{equation}
in $\mathbb{R}^{3}$, where $i=1,2,3$, $\nu\geq0$ is the kinetic
viscosity and $p(x,t)$ is the pressure which is uniquely determined
by $u(x,t)$ up to a constant at every $t$. Einstein's convention
that the term with a pair of repeated indices are summed over from
1 to 3 has been applied. The case where $\nu=0$ corresponds to inviscid
fluid flows, and the Navier-Stokes equations are reduced to the Euler
equations.

The vorticity of $u$ is denoted by $\omega=(\omega^{1},\omega^{2},\omega^{3})$,
although, which will be specified, we will depart from this convention
in order to present some results in a general setting. By definition
$\omega^{i}=\varepsilon^{ijk}\frac{\partial}{\partial x^{j}}u^{k}$
for $i=1,2,3$. The vorticity equations, which are the equations of
vortex motion, play a dominated role in vortex methods, and are
obtained by differentiating the Navier-Stokes equations 
\begin{equation}
\frac{\partial}{\partial t}\omega^{i}+u^{j}\frac{\partial}{\partial x^{j}}\omega^{i}=\nu\Delta\omega^{i}+\omega^{j}\frac{\partial}{\partial x^{j}}u^{i}\label{vort-eq1}
\end{equation}
for $i=1,2,3$. The non-linear term $u\cdot\nabla\omega$ on the left-hand
side is the convection of $\omega$ which appears for both 2D and
3D flows. The second non-linear term $\omega\cdot\nabla u$, representing
the stretching of the vorticity, appears only in 3D flows. This fact makes
substantial difference between 2D flows and 3D flows. The study of some turbulence 
problems (see for example Saffman\cite{saffman1992vortex}, Pullin and Saffman\cite{Pullin1998}) can be formulated in terms of Cauchy's initial value problem to
the vorticity equations (\ref{vort-eq1}) together with the equation that $\omega=\nabla\wedge u$, subject to the initial vorticity $\omega_0=\omega(\cdot,0)$. 
Vortex methods aim to provide numerical schemes to the initial value problem.

The velocity $u(x,t)$, under the assumption that both $u$ and $\omega$
decay sufficiently fast at the infinity, may be recovered from reading
the vorticity via the Biot-Savart law 
\begin{equation}
u^{i}(x,t)=\int_{\mathbb{R}^{3}}\varepsilon^{ijk}G^{j}(x-y)\omega^{k}(y,t)dy\label{eq:vort-eq2}
\end{equation}
where $G=(G^{1},G^{2},G^{3})$ and 
\[
G(z)=-\frac{1}{4\pi}\frac{z}{|z|^{3}}
\]
is the Biot-Savart singular integral kernel.

Now we are in a position to describe our simple random vortex dynamics.
Two (random) vector fields $V(x,t)$ and $W(x,t)$ will be defined below which do
not necessarily satisfy the Navier-Stokes equations nor the relation
that $W=\nabla\wedge V$. Our goal is in fact to construct approximation
solutions to the vorticity equation (\ref{vort-eq1}) directly, in
the spirit which is quite like Feynman's functional integration for
Schr\"{o}dinger's equations. Hence $V$ will be the approximate velocity
of $u$, and $W$ the approximation of $\omega$.

Following the general ideas in the vortex methods, we propose the
following dynamics scheme for the vortex motion of an incompressible fluid flow. At the initial time
$t=0$, we sample a (finite) collection of locations $x_{n}\in\mathbb{R}^{3}$,
where $n$ runs through a finite index set, at which the major vortex
motion may be demonstrated. For example $x_{n}$ can be the centers of
vortex rings. Suppose the initial vorticity of the fluid flow is given
by 
\begin{equation}
W^{i}(x,0)=\sum_{n}A_{n}^{i}(0)\varphi(x-x_{n})\label{eq:int-dis0}
\end{equation}
where $A_{n}(0)$, vectors with components $A_{n}^{i}(0)$ (for $i=1,2,3$)
are the initial vortices, and $\varphi$ is a wavelet type function which should be
close to the Dirac delta (at $0$) function. Hence at the initial stage, the vortices of the fluid flow are distributed among $x_n$ so that
\[ W(x,0) \simeq \sum_{n}A_{n}(0)\delta(x-x_{n}),\]
see for example Cottet and Koumoutsakos\cite{cottet2000vortex}. Therefore (\ref{eq:int-dis0})
provides us with the distribution of the initial vortices in the fluid flow. The key
idea is that both the loci $x_{n}$ and the initial vortices $A_{n}(0)$
are sampled so that they are representative for the vortex motion
at the initial stage. The initial vorticity $W(x,0)$ will
be transported along the Brownian fluid particles. More precisely
the vorticity $A_{n}(0)$ at time $t>0$ is transported to a new location
$X_{n}(t)$ with a new vorticity $A_{n}(t)=(A_{n}^{1}(t),A_{n}^{2}(t),A_{n}^{3}(t))$,
so that the distribution of the vortices at time $t>0$ is given by
\begin{equation}
W^{i}(x,t)=\sum_{n}A_{n}^{i}(t)\varphi(x-X_{n}(t))\label{eq:vort-2.2}
\end{equation}
for $i=1,2,3$. The velocity $V(x,t)$ at time $t>0$ is defined in
terms of the Biot-Savart law (\ref{eq:vort-eq2}): 
\begin{equation}
V^{i}(x,t)=\int_{\mathbb{R}^{3}}\varepsilon^{ijk}G^{j}(x-y)W^{k}(y,t)dy,\label{eq:vort-eq2-1}
\end{equation}
so that the relation between $W$ and $V$ is maintained at least
partly. It remains to determine the dynamics of $A_{n}$ and $X_{n}$.
By initiative, the velocity of $X_{n}=(X_{n}^{1},X_{n}^{2},X_{n}^{3})$
should be the velocity of the fluid flow, so that 
\begin{equation}
dX_{n}^{i}(t)=V^{i}(X_{n}(t),t)dt+\sqrt{2\nu}dB^{i}(t),\quad X_{n}(0)=x_{n}\label{eq:taylor-01}
\end{equation}
for $i=1,2,3$ and $B=(B^1,B^2,B^3)$ is a standard 3D Brownian motion on a probability space $(\Omega,\cal{F},\mathbb{P})$. That is, $X_{n}$ are Taylor's diffusions initiated
from $x_{n}$, and therefore $X_{n}$ are the Brownian fluid particles
started at $x_{n}$. The dynamics of the vortices $A_{n}=(A_{n}^{1},A_{n}^{2},A_{n}^{3})$
are determined by the following ordinary differential equations 
\begin{equation}
dA_{n}^{j}(t)=A_{n}^{k}(t)\frac{\partial V^{j}}{\partial x^{k}}(X_{n}(t),t)dt,\quad A_{n}(0)=A_{n}(0)\label{eq:stratching1}
\end{equation}
where $i=1,2,3$, which are responsible for the vortex stretching
of the fluid flow. For 2D fluid flows, since there is no vorticity
stretching, so $A_{n}$ stay as constant vectors along the fluid particle
trajectories, see the section about 2D flows below.

In the next section, we will translate the previous system (\ref{eq:taylor-01}),
(\ref{eq:stratching1}), (\ref{eq:vort-2.2}) and (\ref{eq:vort-eq2-1})
together into a closed system of stochastic differential equations
which thus define our random vortex method.

The main contribution of the present paper is to show that the previous
simple vortex dynamics gives rise to good approximations to the motions
of vortex dynamics. We demonstrate this by showing several theoretical
results about the approximation solutions to the vorticity equations
constructed in terms of $X_{n}$ and $A_{n}$, and also by simulations
based on this simple vortex method.

The paper is organized as the following. In the next Section 2, we describe
the system of stochastic differential equations which implement a
simple vortex method. This system of SDEs
allows us to employ the Monte-Carlo simulation to the study of incompressible
fluid flows, which will be demonstrated in Section 7. In Section 3,
we derive the approximation vorticity equation, which takes a form
of a simple stochastic partial differential equation. In Section 4
and 5 we discuss the weak convergence results with respect to certain
Sobolev norms, which justify our simple vortex method. In Section
6, we discuss the 2D incompressible flows, and not surprisingly we
show that weak convergence holds for all time.

\section{Description of the random vortex method}

In this section we propose, following the approach outlined in the
Introduction, a random vortex dynamics system in a slightly general setting,
otherwise we will maintain the notations established in the previous
section.

The main change will be made for the relation between the vorticity
field $W$ and the velocity $V$ which may be not given by the vector
identity $W=\nabla\wedge V$, instead it will be given in terms of
a vector integral kernel $K=(K^{1},K^{2},K^{3})$, where $K^{i}$
are locally integrable functions on $\mathbb{R}^{3}$. The reason
to work with a slightly general kernel $K$ than the Biot-Savart kernel
$G$ is the following. For solving the initial value problem to the
vorticity equations (\ref{vort-eq1}) where $\omega=\nabla\wedge u$
and $\textrm{div}u=0$, which are equivalent to the system (\ref{vort-eq1})
and (\ref{eq:vort-eq2}), mathematical difficulty arises since the
Biot-Savart kernel $G$ is singular, so it is natural to replace $G$
by its smooth approximations. The simple way to create an approximation is
to mollify the Biot-Savart kernel $G$. That is, choosing a smooth
function $\psi\geq0$ with a compact support in $(-\frac{1}{2},\frac{1}{2})^{3}$
and $\int_{\mathbb{R}^{3}}\psi(x)dx=1$. For each $\delta>0$, set
$\psi_{\delta}(x)=\delta^{-3}\psi(\delta^{-1}x)$ and $G_{\delta}(x)=G\star\psi_{\delta}(x)$,
where $\star$ denotes the convolution, i.e. 
\begin{equation}
G_{\delta}(x)=\int_{\mathbb{R}^{3}}G(y)\psi_{\delta}(x-y)dy.\label{G-delta-def}
\end{equation}
Then $G_{\delta}$ is smooth and 
\begin{equation}
\left\Vert D^{k}G_{\delta}\right\Vert \leq C_{k}\frac{1}{\delta^{2+k}}\label{Est-G-delta}
\end{equation}
for some constant $C_{k}>0$, for $k=1,2,\cdots$, depending only on the regularization $\psi$. Moreover
$G_{\delta}\rightarrow G$ as $\delta\downarrow0$ in distribution
sense. Therefore the initial value problem to the following system
\begin{equation}
\frac{\partial}{\partial t}\omega^{i}+u^{j}\frac{\partial}{\partial x^{j}}\omega^{i}=\nu\Delta\omega^{i}+\omega^{j}\frac{\partial}{\partial x^{j}}u^{i}\label{a-vort1}
\end{equation}
with the initial data $\omega(x,0)=\omega_0(x)$, and 
\begin{equation}
u^{i}(x,t)=\int_{\mathbb{R}^{3}}\varepsilon^{ijk}G_{\delta}^{j}(x-y)\omega^{k}(y,t)dy\label{a-vort2}
\end{equation}
for $\delta>0$ gives rise to approximation solutions, denoted by $\omega^{\delta}$
and $u^{\delta}$. In fact as long as the initial data $\omega_{0}$
is smooth with a compact support, then $\omega^{\delta}\rightarrow\omega$
as $\delta\downarrow0$ with respect to some Sobolev norm at least for small time.  Therefore, from the computational view-point, we only need to develop numerical schemes for the approximation solutions $\omega^{\delta}$
and $u^{\delta}$, thus
it is important to work with singular kernels $K$ such as $G_{\delta}$.
While we would like to point out that  the procedure for going to the approximation equations (\ref{a-vort1})
and (\ref{a-vort2}) seems unnecessary for implementing the random
vortex method below, rather than for the technical reason that a priori estimates are
not available for 3D Navier-Stokes equations, see Lemma \ref{lem4}
below.

Let us now define the random vortex system. In defining the distribution
of the initial vortices (\ref{eq:int-dis0}) the wavelet type function
$\varphi$ is assumed to be smooth with a compact support about the
original $0$. Let us assume that the support of $\varphi\geq0$ lies
inside the ball at $0$ with radius $r_{\varphi}>0$ which will be
chosen to be small, and the total mass $\int_{\mathbb{R}^{3}}\varphi(x)dx=1$.
These are the structure data for our random vortex scheme, which are
fixed if we are not care about the convergence issue.

Let $W(x,0)$ and $W(x,t)$ (for $t>0$) be given by (\ref{eq:int-dis0})
and (\ref{eq:vort-2.2}) respectively, where $x_{n}$ and $A_{n}(0)$
are sampled initially, so they are the fixed data too. We then define
the vector field $V(x,t)$ by the convolution (together with the wedge
product) 
\begin{equation}
V^{i}(x,t)=\int_{\mathbb{R}^{3}}\varepsilon^{ijk}K^{j}(x-y)W^{k}(y,t)dy\label{eq:v-k-def1}
\end{equation}
which coincides with (\ref{eq:vort-eq2}) if $K$ is the Biot-Savart
kernel $G$. The dynamics of $(X_{n},A_{n})$ are still defined by
(\ref{eq:taylor-01}) and (\ref{eq:stratching1}). Together with (\ref{eq:vort-2.2})
we deduce that 
\begin{equation}
V^{i}(x,t)=-\sum_{n}\varepsilon^{ikj}A_{n}^{k}(t)K_{\varphi}^{j}(x-X_{n}(t))\label{eq:vector-01}
\end{equation}
for $i=1,2,3$, $t\geq0$ and $x\in\mathbb{R}^{3}$, where 
\[
K_{\varphi}^{j}(x)=\int_{\mathbb{R}^{3}}K^{j}(x-y)\varphi(y)dy
\]
(where $j=1,2,3$) turns out to be the mollification of $K^{j}$ by
 $\varphi$. This is a nice feature in this scheme.
By utilizing equation (\ref{eq:vector-01}) and substituting it into
(\ref{eq:taylor-01}) the dynamic system for the Brownian fluid particles
$X_{n}$ can be reformulated as the following SDEs 
\begin{eqnarray}
&&dX_{n}^{i}(t)=-\sum_{m}\varepsilon^{ikj}A_{m}^{k}(t)K_{\varphi}^{j}(X_{n}(t)-X_{m}(t))dt+\sqrt{2\nu}dB^{i}(t), \nonumber
\\&& X_{n}(0)=x_{n},\label{eq:X-n eq2}
\end{eqnarray}

and similarly the dynamics system (\ref{eq:stratching1}) for $A_{n}$
may be written as the following ODEs 
\begin{eqnarray}
&&dA_{n}^{i}(t)=A_{n}^{l}(t)\sum_{m}\varepsilon^{ijk}\frac{\partial K_{\varphi}^{j}}{\partial x^{l}}(X_{n}(t)-X_{m}(t))A_{m}^{k}(t)dt, \nonumber\\&& A_{n}(0)=A_{n}(0),\label{eq:A-n eq2}
\end{eqnarray}
where $i=1,2,3$, $m$ and $n$ run through the finite range of the
initial locations $x_{n}$.

The system of SDEs (\ref{eq:X-n eq2}, \ref{eq:A-n eq2}) is closed
and depends only on the structure data $K$ and $\varphi$. We observe
that $K_{\varphi}$ is smooth if $K$ is locally integrable under
our assumption on $\varphi$. 
\begin{thm}
\label{thm1}Given a finite collection of loci $x_{n}\in\mathbb{R}^{3}$
and a family of initial vortices $A_{n}(0)$, there is a unique maximal
strong solution $(X_{n},A_{n})$ to SDEs (\ref{eq:X-n eq2}, \ref{eq:A-n eq2})
up to the explosion time $\tau>0$. The vector fields defined by 
\begin{equation}
W^{i}(x,t)=\sum_{n}A_{n}^{i}(t)\varphi(x-X_{n}(t))\label{sol-W-1}
\end{equation}
and 
\begin{equation}
V^{i}(x,t)=-\sum_{n}\varepsilon^{ikj}A_{n}^{k}(t)K_{\varphi}^{j}(x-X_{n}(t))\label{sol-V-1}
\end{equation}
are smooth in $x$ for $t<\tau$, where $i=1,2,3$. Moreover the explosion
time $\tau\geq\tau_{A(0),\varphi}$, where 
\begin{equation}
\tau_{A(0),\varphi}=\frac{1}{\left\Vert A(0)\right\Vert \left\Vert DK_{\varphi}\right\Vert _{\infty}},\label{expl-01}
\end{equation}
$\left\Vert A\right\Vert =\sum_{n}|A_{n}|$ and $\left\Vert f\right\Vert _{\infty}$
denotes the $L^{\infty}$-norm of a function on $\mathbb{R}^{3}$.
We note that $\tau_{A(0),\varphi}>0$ is deterministic. 
\end{thm}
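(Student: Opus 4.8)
The plan is to view the coupled system (\ref{eq:X-n eq2}, \ref{eq:A-n eq2}) as a single It\^{o} SDE for the collection $(X_{n},A_{n})$ and to exploit the regularity of $K_{\varphi}$. Since $K$ is locally integrable and $\varphi$ is smooth with compact support, the mollification $K_{\varphi}=K\star\varphi$ and each of its derivatives are smooth and bounded; hence the drift coefficients of the combined system are smooth, in particular locally Lipschitz, while the diffusion coefficient is the constant $\sqrt{2\nu}$. By the standard theory of stochastic differential equations with locally Lipschitz coefficients (localising with the stopping times $\tau_{k}=\inf\{t:|X(t)|+|A(t)|\geq k\}$), one obtains a unique maximal strong solution up to the explosion time $\tau=\lim_{k}\tau_{k}$.

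The smoothness of $W$ and $V$ for $t<\tau$ is then immediate. For each fixed $t<\tau$ the index set is finite and the positions $X_{n}(t)$ and vectors $A_{n}(t)$ are finite, so (\ref{sol-W-1}) and (\ref{sol-V-1}) are finite linear combinations of the translates $\varphi(\cdot-X_{n}(t))$ and $K_{\varphi}^{j}(\cdot-X_{n}(t))$, each of which is a smooth function of $x$.

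The heart of the matter is the deterministic lower bound (\ref{expl-01}). The crucial structural observation is that equation (\ref{eq:A-n eq2}) for $A_{n}$ carries no Brownian term, so it is a pathwise ordinary differential equation, and the randomness enters only through the arguments $X_{n}(t)-X_{m}(t)$ appearing inside $DK_{\varphi}$. Bounding those derivatives uniformly by $\|DK_{\varphi}\|_{\infty}$ removes all dependence on the (random) trajectories. I would estimate, pathwise and for each $n$,
\[
\frac{d}{dt}\,|A_{n}(t)|\;\leq\;|A_{n}(t)|\,\|DK_{\varphi}\|_{\infty}\sum_{m}|A_{m}(t)| ,
\]
and then sum over $n$. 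Writing $\|A\|=\sum_{n}|A_{n}|$ this yields the Riccati differential inequality
\[
\frac{d}{dt}\,\|A(t)\|\;\leq\;\|DK_{\varphi}\|_{\infty}\,\|A(t)\|^{2}.
\]
Integrating $\frac{d}{dt}\bigl(\|A\|^{-1}\bigr)$ gives $\|A(t)\|^{-1}\geq\|A(0)\|^{-1}-\|DK_{\varphi}\|_{\infty}\,t$, so $\|A(t)\|$ remains finite for every $t<\tau_{A(0),\varphi}$; this bound is deterministic precisely because $\|DK_{\varphi}\|_{\infty}$ does not see the Brownian path. Once $\|A(t)\|$ is controlled, the drift of (\ref{eq:X-n eq2}) is dominated by a multiple of $\|A(t)\|\,\|K_{\varphi}\|_{\infty}$, so $X_{n}$ cannot blow up on $[0,\tau_{A(0),\varphi})$ either, whence $\tau\geq\tau_{A(0),\varphi}$ almost surely.

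The step I expect to be the main obstacle is making this a priori estimate genuinely rigorous: one has to run it on the stopped processes $A(t\wedge\tau_{k})$ and let $k\to\infty$ to deduce that explosion can be driven only through $\|A\|$, while simultaneously confirming that the uniform bound $\|DK_{\varphi}\|_{\infty}<\infty$ (valid exactly because $\varphi$ mollifies the locally integrable kernel $K$) decouples the growth of $A$ from the random positions $X_{n}$. One must also track the constants arising from the Levi-Civita symbols and the chosen norm, so that the sharp threshold $\tau_{A(0),\varphi}=1/(\|A(0)\|\,\|DK_{\varphi}\|_{\infty})$ in (\ref{expl-01}) is recovered exactly rather than merely up to a dimensional factor.
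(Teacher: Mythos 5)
Your proposal is correct and takes essentially the same approach as the paper: existence and uniqueness of the maximal strong solution follow from local Lipschitz continuity of the smooth coefficients $K_{\varphi}$, $DK_{\varphi}$, and the deterministic bound $\tau\geq\tau_{A(0),\varphi}$ comes from the identical Riccati-type estimate $\frac{d}{dt}\left\Vert A(t)\right\Vert \leq\left\Vert DK_{\varphi}\right\Vert _{\infty}\left\Vert A(t)\right\Vert ^{2}$, integrated to $\left\Vert A(t)\right\Vert \leq\left\Vert A(0)\right\Vert /\left(1-\left\Vert A(0)\right\Vert \left\Vert DK_{\varphi}\right\Vert _{\infty}t\right)$. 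The only cosmetic difference is the localization device used to make this a priori bound rigorous: you stop the maximal solution at the exit times $\tau_{k}$, whereas the paper truncates the coefficients with a cutoff $\theta_{L}$, solves the truncated SDEs globally, and then sends $L\uparrow\infty$.
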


\begin{proof}
The proof follows from the standard result in It\^{o}'s theory of stochastic
differential equations, see for example Theorem 2.3 on page 173 in
Ikeda and Watanabe\cite{IkedaWatanabe}. The coefficients in defining SDEs (\ref{eq:X-n eq2})
and (\ref{eq:A-n eq2}) are in general not globally Lipschitz, but
nevertheless locally Lipschitz continuous. Therefore the explosion
time $\tau$ may be finite, and the maximal strong solution $(X_{n},A_{n})$
is unique for a given 3D Brownian motion $B=(B^{1},B^{2},B^{3})$
on a probability space $(\varOmega,\mathcal{F},\mathbb{P})$. To derive
the estimate (\ref{expl-01}), we utilize a truncation technique.
For each $L>0$, let $\theta_{L}(s)$ be the cutting-off function
which equals $s$ if $|s|\leq L$ and equals $L$ if $|s|>L$. The
unique strong solution $(X^{L},A^{L})$ to the truncated SDEs 
\[
dX_{n}^{i}(t)=-\sum_{m}\varepsilon^{ikj}\theta_{L}(A_{m}^{k}(t))K_{\varphi}^{j}(X_{n}(t)-X_{m}(t))dt+\sqrt{2\nu}dB^{i}(t)
\]
and 
\[
dA_{n}^{i}(t)=\theta_{L}(A_{n}^{l}(t))\sum_{m}\varepsilon^{ijk}\frac{\partial K_{\varphi}^{j}}{\partial x^{l}}(X_{n}(t)-X_{m}(t))\theta_{L}(A_{m}^{k}(t))dt
\]
subject to the same initial data exist for all $t$ (Theorem 2.4,
page 177 in Ikeda and Watanabe\cite{IkedaWatanabe}). It follows from the second equation
that 
\begin{equation}
\left\Vert A_{n}^{L}(t)\right\Vert \leq\left\Vert A_{n}^{L}(0)\right\Vert +\int_{0}^{t}\left\Vert A_{n}^{L}(s)\right\Vert \sum_{m}\left\Vert DK_{\varphi}\right\Vert _{\infty}\left\Vert A_{m}^{L}(s)\right\Vert ds,\label{Ant-ineq1}
\end{equation}
by adding the inequalities where $n$ runs through the index set,
to obtain that 
\begin{equation}
\left\Vert A^{L}(t)\right\Vert \leq\left\Vert A(0)\right\Vert +\left\Vert DK_{\varphi}\right\Vert _{\infty}\int_{0}^{t}\left\Vert A^{L}(s)\right\Vert ^{2}ds\label{At-ineq1}
\end{equation}
for $t>0$. By Gronwall's inequality 
\begin{equation}
\left\Vert A^{L}(t)\right\Vert \leq\frac{\left\Vert A(0)\right\Vert }{1-\left\Vert A(0)\right\Vert \left\Vert DK_{\varphi}\right\Vert _{\infty}t}\label{At-ineq2}
\end{equation}
for all $t<\tau_{A(0),\varphi}$. The estimate (\ref{expl-01}) now
follows immediately by sending $L\uparrow\infty$. 
\end{proof}
\begin{rem}
From  (\ref{At-ineq2}), it is clear that
\begin{equation}
    \lVert A(t)\rVert \leq 2\lVert A(0)\rVert \label{At-rem}
\end{equation}
for $t<\frac{1}{2}\tau_{A(0),\varphi}$. 

By Gronwall's inequality, it follows from (\ref{Ant-ineq1}) that 
\begin{equation}
\left\Vert A_{n}(t)\right\Vert \leq\left\Vert A_{n}(0)\right\Vert e^{\left\Vert DK_{\varphi}\right\Vert _{\infty}\int_{0}^{t}\left\Vert A(s)\right\Vert ds}\label{Aest3.6}
\end{equation}
for $t<\tau_{A(0),\varphi}$, and therefore, by (\ref{At-rem})
\begin{equation}
\left\Vert A_{n}(t)\right\Vert \leq\left\Vert A_{n}(0)\right\Vert e^{2\left\Vert DK_{\varphi}\right\Vert _{\infty}\left\Vert A(0)\right\Vert t}\label{Aest3.7}
\end{equation}
for $t<\frac{1}{2}\tau_{A(0),\varphi}$. 
\end{rem}

In the next section, we demonstrate that $(W,V)$ defined in Theorem
\ref{thm1} is an approximation solution to the vorticity equation
(\ref{vort-eq1}) in certain sense.

\section{Approximating the vorticity equation}

In this section, we assume that the structure data $K,\varphi$, $x_{n}$
and $A_{n}(0)$ are given as in the previous section. $(X_{n},A_{n})$
is the unique maximal solution pair to the SDEs (\ref{eq:X-n eq2})
and (\ref{eq:A-n eq2}), and $V(x,t)$ and $W(x,t)$ are defined in
Theorem \ref{thm1}. 
\begin{thm}
\label{thm2}For each $x\in\mathbb{R}^{3}$, $W(x,t)$ and $V(x,t)$
are continuous semi-martingales for $t\in[0,\tau_{A(0),\varphi})$,
and 
\begin{align}
dW^{i} & =\left(W^{j}\frac{\partial V^{i}}{\partial x^{j}}-V^{j}\frac{\partial W^{i}}{\partial x^{j}}+\nu\Delta W^{i}\right)dt\nonumber \\
 & -\sqrt{2\nu}\frac{\partial W^{i}}{\partial x^{j}}dB^{j}(t)+\left(F^{i}+G^{i}\right)dt\label{approx-vort1}
\end{align}
for $t\in[0,\tau_{A(0),\varphi}]$, where the error terms 
\begin{equation}
F^{i}(x,t)=\sum_{n}A_{n}^{i}(t)\left(V^{j}(x,t)-V^{j}(X_{n}(t),t)\right)\frac{\partial\varphi}{\partial x^{j}}(x-X_{n}(t))\label{F-5.9}
\end{equation}
and 
\begin{equation}
G^{i}(x,t)=\sum_{n}A_{n}^{j}(t)\left(\frac{\partial V^{i}}{\partial x^{j}}(X_{n}(t),t)-\frac{\partial V^{i}}{\partial x^{j}}(x,t)\right)\varphi(x-X_{n}(t)).\label{G-5.10}
\end{equation}
Moreover, according to (\ref{sol-V-1}) and (\ref{sol-W-1}), $V=K\star W$
where the convolution is made with the wedge product of two vectors. 
\end{thm}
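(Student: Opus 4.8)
The plan is to treat $x\in\mathbb{R}^{3}$ as a fixed parameter and apply It\^{o}'s formula to each summand $A_{n}^{i}(t)\varphi(x-X_{n}(t))$ of $W^{i}(x,t)$. On the stochastic interval $[0,\tau_{A(0),\varphi})$ the vortices $A_{n}$ are bounded by the estimates in the Remark and the $X_{n}$ are continuous It\^{o} processes by Theorem \ref{thm1}, while $\varphi$ is smooth with compact support; hence every summand is a continuous semi-martingale, and so are the finite sums $W^{i}(x,\cdot)$ and $V^{i}(x,\cdot)$ (the latter through the smooth kernel $K_{\varphi}$). This disposes of the semi-martingale claim and makes the It\^{o} calculus below legitimate.

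For a single summand I would use the It\^{o} product rule. The finite-variation factor $A_{n}^{i}$ contributes, via (\ref{eq:stratching1}), the term $A_{n}^{k}(t)\frac{\partial V^{i}}{\partial x^{k}}(X_{n}(t),t)\,\varphi(x-X_{n}(t))\,dt$. For the factor $\varphi(x-X_{n}(t))$ I insert $dX_{n}^{l}=V^{l}(X_{n}(t),t)\,dt+\sqrt{2\nu}\,dB^{l}$ and $d\langle X_{n}^{l},X_{n}^{m}\rangle=2\nu\,\delta^{lm}\,dt$; noting that $\frac{\partial}{\partial X_{n}^{l}}\varphi(x-X_{n})=-\frac{\partial\varphi}{\partial x^{l}}(x-X_{n})$ and $\frac{\partial^{2}}{\partial X_{n}^{l}\partial X_{n}^{m}}\varphi(x-X_{n})=\frac{\partial^{2}\varphi}{\partial x^{l}\partial x^{m}}(x-X_{n})$, It\^{o}'s formula produces a drift $-\frac{\partial\varphi}{\partial x^{l}}(x-X_{n})V^{l}(X_{n},t)\,dt+\nu\Delta\varphi(x-X_{n})\,dt$ together with the martingale part $-\sqrt{2\nu}\,\frac{\partial\varphi}{\partial x^{l}}(x-X_{n})\,dB^{l}$. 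Because each summand depends on one single $X_{n}$, no cross-variation between distinct indices $n$ arises, and the single shared Brownian motion $B$ is precisely what lets the martingale parts recombine after summation.

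Summing over $n$ and using $\frac{\partial W^{i}}{\partial x^{j}}=\sum_{n}A_{n}^{i}\frac{\partial\varphi}{\partial x^{j}}(x-X_{n})$ and $\Delta W^{i}=\sum_{n}A_{n}^{i}\Delta\varphi(x-X_{n})$, the diffusion term collapses to $\nu\Delta W^{i}\,dt$ and the martingale term to $-\sqrt{2\nu}\,\frac{\partial W^{i}}{\partial x^{j}}\,dB^{j}$, exactly as claimed. The remaining drift contributions are evaluated at $X_{n}(t)$ rather than at the fixed $x$, so the decisive algebraic step is an add-and-subtract manoeuvre. Writing $V^{j}(X_{n},t)=V^{j}(x,t)-\bigl(V^{j}(x,t)-V^{j}(X_{n},t)\bigr)$ in the convection contribution yields $-V^{j}(x,t)\frac{\partial W^{i}}{\partial x^{j}}$ plus exactly the error term $F^{i}$ of (\ref{F-5.9}); writing $\frac{\partial V^{i}}{\partial x^{j}}(X_{n},t)=\frac{\partial V^{i}}{\partial x^{j}}(x,t)+\bigl(\frac{\partial V^{i}}{\partial x^{j}}(X_{n},t)-\frac{\partial V^{i}}{\partial x^{j}}(x,t)\bigr)$ in the stretching contribution yields $W^{j}\frac{\partial V^{i}}{\partial x^{j}}$ plus the error term $G^{i}$ of (\ref{G-5.10}). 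Collecting the four groups reproduces (\ref{approx-vort1}).

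The final identity $V=K\star W$ is a direct verification: substituting $W^{k}(y,t)=\sum_{n}A_{n}^{k}(t)\varphi(y-X_{n}(t))$ into the defining convolution (\ref{eq:v-k-def1}) and changing variables $z=y-X_{n}(t)$ turns the inner integral into $K_{\varphi}^{j}(x-X_{n}(t))$, recovering (\ref{sol-V-1}) after using $\varepsilon^{ijk}=-\varepsilon^{ikj}$. I expect the only real obstacle to be bookkeeping rather than analysis: one must carry out the add-and-subtract step with the correct signs so that the two mismatches between evaluating $V$, respectively $\nabla V$, at $X_{n}(t)$ and at $x$ are packaged precisely into $F^{i}$ and $G^{i}$, and one must confirm that these error terms are genuine finite-variation drifts. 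The latter holds because $V$ is smooth in $x$ by Theorem \ref{thm1} and $(X_{n},A_{n})$ are controlled on $[0,\tau_{A(0),\varphi})$, so each $F^{i}$ and $G^{i}$ is a continuous adapted process integrable against $dt$.
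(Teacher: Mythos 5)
Your proposal is correct and takes essentially the same route as the paper's proof: It\^{o}'s formula applied to the sum $\sum_{n}A_{n}^{i}(t)\varphi(x-X_{n}(t))$, substitution of the dynamics of $(X_{n},A_{n})$, collapse of the quadratic-variation and martingale parts into $\nu\Delta W^{i}\,dt$ and $-\sqrt{2\nu}\,\frac{\partial W^{i}}{\partial x^{j}}\,dB^{j}$, and the add-and-subtract of $V^{j}$ and $\frac{\partial V^{i}}{\partial x^{j}}$ at $x$ versus $X_{n}(t)$ to isolate precisely $F^{i}$ and $G^{i}$. The only cosmetic difference is that you keep the drifts expressed through $V$ along the trajectories (legitimate, since the coefficients of (\ref{eq:X-n eq2}) and (\ref{eq:A-n eq2}) are exactly $V$ and $DV$ evaluated at $X_{n}$), whereas the paper first writes out the explicit $K_{\varphi}$ double sums and then recombines them -- the two computations are identical in substance.
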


\begin{proof}
Since $A_{n}(x,t)$ are continuous semi-martingales with finite variations
(in $t$), by applying It\^{o}'s formula to $W^{i}$ define in (\ref{sol-W-1}),
we have 
\begin{align*}
&dW^{i}(x,t) \\ =&\sum_{n}\varphi(x-X_{n}(t))dA_{n}^{i}(t)-\sum_{n}A_{n}^{i}(t)\frac{\partial\varphi}{\partial x^{j}}(x-X_{n}(t))dX_{n}^{j}(t)\\
+&\nu\sum_{n}A_{n}^{i}(t)\Delta\varphi(x-X_{n}(t))dt.
\end{align*}
Next we substitute $dX_{n}$ and $dA_{n}$ by using the SDEs (\ref{eq:X-n eq2})
and (\ref{eq:A-n eq2}) to obtain that 
\begin{align}
&dW^{i}(x,t)\nonumber\\  =&\sum_{n}\sum_{m}\varepsilon^{iqk}A_{n}^{l}(t)A_{m}^{k}(t)\varphi(x-X_{n}(t))\frac{\partial K_{\varphi}^{q}}{\partial x^{l}}(X_{n}(t)-X_{m}(t))dt\nonumber \\
 +&\sum_{n}\sum_{m}\varepsilon^{jkq}A_{m}^{k}(t)A_{n}^{i}(t)K_{\varphi}^{q}(X_{n}(t)-X_{m}(t))\frac{\partial\varphi}{\partial x^{j}}(x-X_{n}(t))dt\nonumber \\
 +&\nu\sum_{n}A_{n}^{i}(t)\Delta\varphi(x-X_{n}(t))dt\nonumber\\-&\sqrt{2\nu}\sum_{n}A_{n}^{i}(t)\frac{\partial\varphi}{\partial x^{j}}(x-X_{n}(t))dB^{j}(t).\label{eq:DW}
\end{align}
On the other hand, according to the construction (\ref{sol-W-1}),
(\ref{sol-V-1}) we have 
\begin{align*}
&V^{j}\frac{\partial W^{i}}{\partial x^{j}}(x,t) \\ =&-\sum_{m}\sum_{n}\varepsilon^{jkq}A_{m}^{k}(t)A_{n}^{i}(t)K_{\varphi}^{q}(x-X_{m}(t))\frac{\partial\varphi}{\partial x^{j}}(x-X_{n}(t))\\
  =&-\sum_{m}\sum_{n}\varepsilon^{jkq}A_{m}^{k}(t)A_{n}^{i}(t)K_{\varphi}^{q}(X_{n}(t)-X_{m}(t))\frac{\partial\varphi}{\partial x^{j}}(x-X_{n}(t))\\
  +&\sum_{n}A_{n}^{i}(t)\left(V^{j}(x,t)-V^{j}(X_{n}(t),t)\right)\frac{\partial\varphi}{\partial x^{j}}(x-X_{n}(t)),
\end{align*}
\begin{align*}
&W^{j}\frac{\partial V^{i}}{\partial x^{j}}(x,t) \\  =&-\sum_{n}\sum_{m}\varepsilon^{ikq}A_{m}^{j}(t)A_{n}^{k}(t)\varphi(x-X_{m}(t))\frac{\partial K_{\varphi}^{q}}{\partial x^{j}}(x-X_{n}(t))\\
   =&\sum_{n}\sum_{m}\varepsilon^{iqk}A_{n}^{l}(t)A_{m}^{k}(t)\varphi(x-X_{n}(t))\frac{\partial K_{\varphi}^{q}}{\partial x^{l}}(x-X_{m}(t))\\
   =&\sum_{n}\sum_{m}\varepsilon^{iqk}A_{n}^{l}(t)A_{m}^{k}(t)\varphi(x-X_{n}(t))\frac{\partial K_{\varphi}^{q}}{\partial x^{l}}(X_{n}(t)-X_{m}(t))\\
   -&\sum_{n}A_{n}^{l}(t)\left(\frac{\partial V^{i}}{\partial x^{l}}(X_{n}(t),t)-\frac{\partial V^{i}}{\partial x^{l}}(x,t)\right)\varphi(x-X_{n}(t))
\end{align*}
and 
\[
\Delta W^{i}(x,t)=\sum_{n}A_{n}^{i}(t)\Delta\varphi(x-X_{n}(t)).
\]
Substituting these equations into (\ref{eq:DW}), we obtain (\ref{approx-vort1}). 
\end{proof}

We next show that $W$ and $V$ are approximation solutions to the
vorticity equation (\ref{vort-eq1}) with the initial vorticity $W(x,0)$. 
\begin{lem}
\label{lem4}Suppose the support of $\varphi$ is contained in the
ball centered at $0$ with radius $r_{\varphi}>0$.

1) It holds that 
\[
|F(x,t)|\leq2\left\Vert A(t)\right\Vert ^{2}\left\Vert DK_{\varphi}\right\Vert _{\infty}\left\Vert D\varphi\right\Vert _{\infty}r_{\varphi}
\]
and 
\[
|G(x,t)|\leq2\left\Vert A(t)\right\Vert ^{2}\left\Vert D^{2}K_{\varphi}\right\Vert _{\infty}\left\Vert \varphi\right\Vert _{\infty}r_{\varphi}
\]
for $t<\tau_{A(0),\varphi}$ and $x\in\mathbb{R}^{3}$.

2) There is a positive constant $C$ depending only on $\left\Vert A(0)\right\Vert $,
$\left\Vert DK_{\varphi}\right\Vert _{\infty}$, $\left\Vert D^{2}K_{\varphi}\right\Vert _{\infty}$,
$\left\Vert D\varphi\right\Vert _{\infty}$ and $\left\Vert \varphi\right\Vert _{\infty}$
such that 
\[
|F(x,t)|\leq Cr_{\varphi}\quad\textrm{ and }|G(x,t)|\leq Cr_{\varphi}
\]
for all $t\in\left[0,\frac{1}{2}\tau_{A(0),\varphi}\right]$ and $x\in\mathbb{R}^{3}$. 
\end{lem}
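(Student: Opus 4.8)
The plan is to exploit the compact support of $\varphi$ to localize every summand in the expressions (\ref{F-5.9}) and (\ref{G-5.10}), and then to control the spatial increments of $V$ and of $DV$ by the mean value theorem. First I would note that $\varphi$ and $D\varphi$ vanish outside the ball of radius $r_{\varphi}$ about $0$, so in $F^i$ and $G^i$ only those indices $n$ with $|x-X_n(t)|\leq r_{\varphi}$ contribute. On each such term the increment $V^j(x,t)-V^j(X_n(t),t)$ appearing in $F$ is bounded, by the mean value theorem, by $\Vert DV(\cdot,t)\Vert_\infty\,|x-X_n(t)|\leq \Vert DV(\cdot,t)\Vert_\infty\, r_{\varphi}$; likewise the increment $\frac{\partial V^i}{\partial x^j}(X_n(t),t)-\frac{\partial V^i}{\partial x^j}(x,t)$ appearing in $G$ is bounded by $\Vert D^2V(\cdot,t)\Vert_\infty\, r_{\varphi}$.

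Next I would convert the sup-norms of $DV$ and $D^2V$ into the structure data by differentiating the explicit formula (\ref{sol-V-1}) term by term. This gives $\frac{\partial V^i}{\partial x^l}(x,t)=-\sum_n \varepsilon^{ikj}A_n^k(t)\frac{\partial K_\varphi^j}{\partial x^l}(x-X_n(t))$, and a second derivative with $\frac{\partial^2 K_\varphi^j}{\partial x^l\partial x^p}$ in place of $\frac{\partial K_\varphi^j}{\partial x^l}$. Taking absolute values and using $\sum_n |A_n(t)|=\Vert A(t)\Vert$ yields $\Vert DV(\cdot,t)\Vert_\infty\leq \Vert A(t)\Vert\,\Vert DK_\varphi\Vert_\infty$ and $\Vert D^2V(\cdot,t)\Vert_\infty\leq \Vert A(t)\Vert\,\Vert D^2K_\varphi\Vert_\infty$, up to the harmless combinatorial factor coming from contraction with $\varepsilon^{ikj}$.

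I would then assemble the estimate for $F$ by bounding $|A_n^i(t)|$, summing the localized terms via $\sum_n |A_n(t)|\leq \Vert A(t)\Vert$, and inserting $\Vert D\varphi\Vert_\infty$ for $\frac{\partial\varphi}{\partial x^j}$; this produces $|F(x,t)|\leq \Vert A(t)\Vert^2\,\Vert DK_\varphi\Vert_\infty\,\Vert D\varphi\Vert_\infty\, r_{\varphi}$, the stated factor $2$ absorbing the $\varepsilon$-combinatorics and the chosen vector norm. The bound on $G$ is entirely parallel, replacing $\Vert DK_\varphi\Vert_\infty\,\Vert D\varphi\Vert_\infty$ by $\Vert D^2K_\varphi\Vert_\infty\,\Vert\varphi\Vert_\infty$ and the $V$-increment by the $DV$-increment, which establishes part 1).

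Finally, part 2) follows immediately. On $[0,\tfrac12\tau_{A(0),\varphi}]$ the Remark (\ref{At-rem}) gives $\Vert A(t)\Vert\leq 2\Vert A(0)\Vert$, hence $\Vert A(t)\Vert^2\leq 4\Vert A(0)\Vert^2$, and both bounds of part 1) take the form $Cr_{\varphi}$ with $C=8\Vert A(0)\Vert^2\max\{\Vert DK_\varphi\Vert_\infty\Vert D\varphi\Vert_\infty,\ \Vert D^2K_\varphi\Vert_\infty\Vert\varphi\Vert_\infty\}$, which depends only on the listed quantities. I do not expect any genuine obstacle here; the single point requiring care is the index bookkeeping with the Levi-Civita symbol and the precise choice of vector norm, which is exactly what the constant $2$ in part 1) is held in reserve for.
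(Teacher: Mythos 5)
Your proposal is correct and follows essentially the same route as the paper: the paper likewise derives the Lipschitz bounds $|V(x,t)-V(y,t)|\leq2\left\Vert A(t)\right\Vert \left\Vert DK_{\varphi}\right\Vert |x-y|$ and $|DV(x,t)-DV(y,t)|\leq2\left\Vert A(t)\right\Vert \left\Vert D^{2}K_{\varphi}\right\Vert |x-y|$ directly from the explicit formula (\ref{sol-V-1}), localizes via the support of $\varphi$ and $D\varphi$, and obtains 2) from 1) together with the bound $\left\Vert A(t)\right\Vert \leq2\left\Vert A(0)\right\Vert$ of (\ref{At-rem}). Your accounting of the factor $2$ as coming from the two nonvanishing contractions of the Levi-Civita symbol matches the paper's constants exactly, and your explicit $C=8\left\Vert A(0)\right\Vert ^{2}\max\{\cdots\}$ simply makes the paper's one-line deduction of 2) quantitative.
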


\begin{proof}
By definition (\ref{sol-V-1}) 
\begin{equation}
V^{i}(x,t)=-\sum_{n}\varepsilon^{ikj}A_{n}^{k}(t)K_{\varphi}^{j}(x-X_{n}(t))\label{V-vect},
\end{equation}
we have
\begin{equation}
|V(x,t)-V(y,t)|\leq2\left\Vert A(t)\right\Vert \left\Vert DK_{\varphi}\right\Vert |x-y|\label{eq:Vd-1}
\end{equation}
and 
\begin{equation}
|DV(x,t)-DV(y,t)|\leq2\left\Vert A(t)\right\Vert \left\Vert D^{2}K_{\varphi}\right\Vert |x-y|.\label{eq:V-d2}
\end{equation}
Therefore 1) follows immediately and 2) follows from 1) and (\ref{At-rem}). 
\end{proof}
\begin{cor}
\label{cor5}Suppose the support of the wavelet function $\varphi$
is contained in the ball centered at $0$ with radius $r_{\varphi}>0$.
Let 
\begin{align}
&U^{i}(x,t)\nonumber\\=&W^{i}(x,t)-W^{i}(x,0) \nonumber\\-&\int_{0}^{t}\left(W^{j}\frac{\partial V^{i}}{\partial x^{j}} V^{j}\frac{\partial W^{i}}{\partial x^{j}}+\nu\Delta W^{i}\right)(x,s)ds\label{U-eq1}
\end{align}
for $i=1,2,3$ and $t<\tau_{A(0),\varphi}$ and $x\in\mathbb{R}^{3}$.
Then 
\begin{equation}
\left|\mathbb{E}\left[U^{i}(x,t)\right]\right|\leq C_{1}\tau_{A(0),\varphi}r_{\varphi}\label{U-eq2}
\end{equation}
for all $t\in\left[0,\frac{1}{2}\tau_{A(0),\varphi}\right]$. 
\end{cor}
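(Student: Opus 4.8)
The plan is to read $U^{i}$ off directly from the semimartingale representation of $W^{i}$ established in Theorem~\ref{thm2}, and then to exploit that the only stochastic contribution is a genuine martingale with vanishing mean. Writing (\ref{approx-vort1}) in integrated form, for $t<\tau_{A(0),\varphi}$ one has
\[
W^{i}(x,t)-W^{i}(x,0)=\int_{0}^{t}\Big(W^{j}\tfrac{\partial V^{i}}{\partial x^{j}}-V^{j}\tfrac{\partial W^{i}}{\partial x^{j}}+\nu\Delta W^{i}\Big)(x,s)\,ds-\sqrt{2\nu}\int_{0}^{t}\tfrac{\partial W^{i}}{\partial x^{j}}(x,s)\,dB^{j}(s)+\int_{0}^{t}\big(F^{i}+G^{i}\big)(x,s)\,ds.
\]
Comparing with the definition (\ref{U-eq1}) of $U^{i}$ (read with the intended minus sign between the two convection terms), the deterministic drift cancels exactly, leaving
\[
U^{i}(x,t)=-\sqrt{2\nu}\int_{0}^{t}\tfrac{\partial W^{i}}{\partial x^{j}}(x,s)\,dB^{j}(s)+\int_{0}^{t}\big(F^{i}+G^{i}\big)(x,s)\,ds.
\]

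Next I would argue that the It\^{o} integral has zero expectation. Since $W^{i}(x,t)=\sum_{n}A_{n}^{i}(t)\varphi(x-X_{n}(t))$, its spatial derivative is $\frac{\partial W^{i}}{\partial x^{j}}(x,s)=\sum_{n}A_{n}^{i}(s)\frac{\partial\varphi}{\partial x^{j}}(x-X_{n}(s))$, whose modulus is bounded by $\left\Vert A(s)\right\Vert \left\Vert D\varphi\right\Vert _{\infty}\le 2\left\Vert A(0)\right\Vert \left\Vert D\varphi\right\Vert _{\infty}$ on the deterministic interval $\big[0,\tfrac12\tau_{A(0),\varphi}\big]$, by the \emph{a priori} bound (\ref{At-rem}). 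A deterministically bounded adapted integrand makes $\int_{0}^{\cdot}\frac{\partial W^{i}}{\partial x^{j}}(x,s)\,dB^{j}(s)$ a square-integrable martingale, so its mean vanishes for every fixed $x$ and every $t\le\tfrac12\tau_{A(0),\varphi}$. Hence $\mathbb{E}\big[U^{i}(x,t)\big]=\mathbb{E}\big[\int_{0}^{t}(F^{i}+G^{i})(x,s)\,ds\big]$.

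Finally I would estimate the surviving term by Fubini together with Lemma~\ref{lem4}(2): since $|F(x,s)|\le Cr_{\varphi}$ and $|G(x,s)|\le Cr_{\varphi}$ uniformly for $s\in[0,\tfrac12\tau_{A(0),\varphi}]$, it follows that
\[
\big|\mathbb{E}[U^{i}(x,t)]\big|\le\int_{0}^{t}\mathbb{E}\big[|F^{i}(x,s)|+|G^{i}(x,s)|\big]\,ds\le 2Cr_{\varphi}\,t\le Cr_{\varphi}\,\tau_{A(0),\varphi},
\]
where the last step uses $t\le\tfrac12\tau_{A(0),\varphi}$. This yields (\ref{U-eq2}) with $C_{1}=C$.

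The one genuinely delicate point is the vanishing of the stochastic integral's mean: one must confirm the integrand is adapted and square-integrable up to the \emph{deterministic} time $\tfrac12\tau_{A(0),\varphi}$, which is precisely why the bound (\ref{At-rem}) on $\left\Vert A(t)\right\Vert$ and the non-explosion guarantee $\tau\ge\tau_{A(0),\varphi}$ from Theorem~\ref{thm1} are essential. Everything else is bookkeeping.
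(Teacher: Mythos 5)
Your proposal is correct and follows essentially the same route as the paper's own proof: integrate the semimartingale identity (\ref{approx-vort1}), note that the It\^{o} integral has zero mean, and bound the surviving $F^{i}+G^{i}$ term via Lemma \ref{lem4}(2). You actually supply two details the paper leaves implicit --- the justification of the martingale property through the deterministic bound (\ref{At-rem}) on $\left\Vert A(t)\right\Vert$ over $\left[0,\frac{1}{2}\tau_{A(0),\varphi}\right]$, and the correct reading of the sign typo in (\ref{U-eq1}) --- so no gaps remain.
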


\begin{proof}
By (\ref{approx-vort1}) we have 
\[
U^{i}(x,t)=-\sqrt{2\nu}\int_{0}^{t}\frac{\partial W^{i}}{\partial x^{j}}dB^{j}(s)+\int_{0}^{t}\left(F^{i}+G^{i}\right)ds.
\]
Since under the assumptions on $\varphi$, $\int_{0}^{t}\frac{\partial W^{i}}{\partial x^{j}}dB^{j}(s)$
are martingales for $t\in\left[0,\frac{1}{2}\tau_{A(0),\varphi}\right]$
so that 
\[
\mathbb{E}\left[U^{i}(x,t)\right]=\mathbb{E}\int_{0}^{t}\left(F^{i}+G^{i}\right)ds
\]
for $i=1,2,3$ and (\ref{U-eq2}) follows from Lemma \ref{lem4},
2). 
\end{proof}
\begin{rem}
If $K=G$ is the Biot-Savart kernel, then $\textrm{div}V=0$ according
to the formula (\ref{sol-V-1}) and 
\begin{equation}
V^{i}(x,t)=\int_{\mathbb{R}^{3}}\varepsilon^{ijk}G^{j}(x-y)W^{k}(y,t)dy\label{eq:v-w}
\end{equation}
which implies that $\Delta V=-\nabla\wedge W$. Therefore $W=\nabla\wedge V+\nabla f$
for some scalar function $f$, and $V$ and $W$ satisfy the vorticity
equation approximately in the sense stated in Corollary \ref{cor5}.
The equation that $W=\nabla\wedge V$ may fail in general due to the fact that $W(x,0)$ may be not divergence-free.
The reason why the approximation $V$ and $W$ still do the job (see
the simulations below) nicely is that in practice $\varphi$ is close
to an indicator function of a small ball, and therefore $W(x,t)$
is nearly divergence-free, and the relation $W\simeq\nabla\wedge V$
may be restored approximately. 
\end{rem}

\section{Weak convergence}

A simple procedure of sampling the initial vortices, for the sake
of theoretical study, may be described as the following. Suppose $\omega_{0}$
is the initial vorticity, which is smooth with a compact support $S$,
to the initial value problem to the vorticity equations (\ref{vort-eq1}).
Let $\psi$ be described at the beginning of the previous section
which defines $G_{\delta}$ for every $\delta>0$. 
\begin{lem}
\label{lem7} Let $p\geq1$, and the error terms in the random vortex
scheme, $F(x,t)$ and $G(x,t)$ be defined by (\ref{F-5.9}) and (\ref{G-5.10}).

1) The following two estimates hold: 
\begin{equation}
\left\Vert F(\cdot,t)\right\Vert _{W^{-1,p}}\leq6\left\Vert A(t)\right\Vert ^{2}\left(\left\Vert DK_{\varphi}\right\Vert \left\Vert \varphi\right\Vert _{L^{p}}r_{\varphi}+\left\Vert \varphi\nabla\wedge K_{\varphi}\right\Vert _{L^{p}}\right)\label{F-est-01}
\end{equation}
and 
\begin{equation}
\left\Vert G(x,t)\right\Vert _{L^{p}}\leq2\left\Vert A(t)\right\Vert ^{2}\left\Vert D^{2}K_{\varphi}\right\Vert _{\infty}\left\Vert \varphi\right\Vert _{L^{p}}r_{\varphi}\label{G-est-01}
\end{equation}
for all $t\in[0,\frac{1}{2}\tau_{A(0),\varphi}]$, where $\left\Vert \cdot\right\Vert _{W^{s,p}}$
denote the Sobolev norms, see \cite{Adams1975}

2) If the singular kernel $K=G_{\delta}$, where $\delta>0$, given
in (\ref{G-delta-def}), then there are universal constant $C_{1},C_{2}$
such that 
\[
\left\Vert F(\cdot,t)\right\Vert _{W^{-1,p}}\leq C_{2}\frac{r_{\varphi}}{\delta^{3}}e^{C_{1}\frac{2}{\delta^{3}}\left\Vert A(0)\right\Vert }\left\Vert A(0)\right\Vert ^{2}\left\Vert \varphi\right\Vert _{L^{p}}
\]
and 
\[
\left\Vert G(x,t)\right\Vert _{L^{p}}\leq C_{2}\frac{r_{\varphi}}{\delta^{4}}e^{C_{1}\frac{2}{\delta^{3}}\left\Vert A(0)\right\Vert }\left\Vert A(0)\right\Vert ^{2}\left\Vert \varphi\right\Vert _{L^{p}}
\]
for $t\in\left[0,\frac{1}{2}\tau_{A(0),\varphi}\wedge1\right]$. 
\end{lem}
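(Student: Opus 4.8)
The plan is to prove the two estimates of part 1) by different routes—a pointwise $L^p$ bound for $G$ and a duality argument for $F$—and then to deduce part 2) by specialising $K=G_\delta$. For $G$ I would argue pointwise. Wherever the blob $\varphi(\cdot-X_n(t))$ is nonzero one has $|x-X_n(t)|\le r_\varphi$, so the Lipschitz bound (\ref{eq:V-d2}) for $DV$ gives $|\partial_j V^i(X_n(t),t)-\partial_j V^i(x,t)|\le 2\|A(t)\|\,\|D^2K_\varphi\|_\infty\,r_\varphi$ on that set. Estimating the contraction $A_n^j(\cdots)$ in (\ref{G-5.10}) by $|A_n(t)|$ times this quantity, summing over $n$, and taking $L^p$ norms—using translation invariance $\|\varphi(\cdot-X_n)\|_{L^p}=\|\varphi\|_{L^p}$ together with $\sum_n|A_n(t)|=\|A(t)\|$—delivers (\ref{G-est-01}) at once.

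For $F$ the derivative sits on $\varphi$, so a crude $L^p$ estimate would cost the large factor $\|D\varphi\|_{L^p}$; to spend one derivative on the test function instead I would use the duality formula $\|F(\cdot,t)\|_{W^{-1,p}}=\sup\{|\int F^i\phi\,dx|:\|\phi\|_{W^{1,q}}\le 1\}$, $1/p+1/q=1$, and integrate by parts in $x^j$ to move $\partial_j$ off $\varphi(\cdot-X_n)$. This splits $\langle F^i,\phi\rangle$ into a term $-\sum_n A_n^i\int\varphi(\cdot-X_n)(V^j-V^j(X_n))\partial_j\phi$ and a term $-\sum_n A_n^i\int\varphi(\cdot-X_n)(\mathrm{div}\,V)\phi$. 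The first is handled exactly as for $G$, now using (\ref{eq:Vd-1}) and $|x-X_n|\le r_\varphi$ on the support, followed by Hölder against $\|\nabla\phi\|_{L^q}\le 1$; this produces the $\|DK_\varphi\|\,\|\varphi\|_{L^p}\,r_\varphi$ contribution. For the second I would read off from (\ref{sol-V-1}) the identity $\mathrm{div}\,V(x,t)=\sum_m A_m(t)\cdot(\nabla\wedge K_\varphi)(x-X_m(t))$; its self-interaction part $m=n$, localised by $\varphi(\cdot-X_n)$, is precisely $\|\varphi\,\nabla\wedge K_\varphi\|_{L^p}$ after Hölder, and summing $\sum_n|A_n|^2\le\|A(t)\|^2$ yields the second term of (\ref{F-est-01}); the constant $6$ absorbs the factor $2$ from (\ref{eq:Vd-1}) together with the three components of the divergence. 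The main obstacle is the organisation of this $\mathrm{div}\,V$ piece: the off-diagonal pairs $m\ne n$ must be shown not to spoil the clean factorisation into $\|A(t)\|^2$ and the single localised norm $\|\varphi\,\nabla\wedge K_\varphi\|_{L^p}$—either because the leading self-interaction dominates or by a coarser bound—and it is exactly here that care is needed.

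Part 2) then follows by inserting $K=G_\delta$ into part 1). The decisive simplification is that $G_\delta$ is a gradient field: since $G(z)=-\tfrac{1}{4\pi}\,z/|z|^3=\nabla(\tfrac{1}{4\pi|z|})$, the mollification in (\ref{G-delta-def}) gives $G_\delta=\nabla(\Phi\star\psi_\delta)$, whence $\nabla\wedge K_\varphi=\nabla\wedge(G_\delta\star\varphi)=0$ and the $\|\varphi\,\nabla\wedge K_\varphi\|_{L^p}$ term drops out altogether—so the off-diagonal worry above is moot in the application. It then remains to substitute the kernel bounds (\ref{Est-G-delta})—using $\int\varphi=1$ so that $\|DK_\varphi\|\le\|DG_\delta\|\le C_1\delta^{-3}$ and $\|D^2K_\varphi\|\le\|D^2G_\delta\|\le C_2\delta^{-4}$—and to control $\|A(t)\|^2$ for $t\le 1$ through (\ref{Aest3.7}), which gives $\|A(t)\|\le\|A(0)\|\,e^{2\|DK_\varphi\|_\infty\|A(0)\|t}\le\|A(0)\|\,e^{2C_1\delta^{-3}\|A(0)\|}$. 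Collecting the powers of $\delta$ produces the advertised $r_\varphi/\delta^3$ and $r_\varphi/\delta^4$ scalings with the exponential prefactor, the numerical constants being absorbed into the universal $C_1,C_2$.
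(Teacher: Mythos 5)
Your proposal follows essentially the same route as the paper's proof: a pointwise bound via the Lipschitz estimate (\ref{eq:V-d2}) for $G$, a duality argument with integration by parts splitting $F$ into the Lipschitz term and the $\mathrm{div}\,V$ term, and part 2) obtained from the curl-free property of $G_{\delta}\star\varphi$, the kernel bounds $\left\Vert DK_{\varphi}\right\Vert \leq C_{1}\delta^{-3}$ and $\left\Vert D^{2}K_{\varphi}\right\Vert \leq C_{2}\delta^{-4}$, and the growth estimate (\ref{Aest3.7}) on $\left\Vert A(t)\right\Vert$. The off-diagonal subtlety you flag in the $\mathrm{div}\,V$ piece is glossed over in the paper's own proof as well, which writes $|\mathrm{div}\,V(x,t)|\leq\left\Vert A(t)\right\Vert |\nabla\wedge K_{\varphi}(x-X_{n}(t))|$ as if only the diagonal pairing survived; as you correctly observe, this is immaterial for part 2), where $\nabla\wedge K_{\varphi}=0$ makes the term vanish altogether.
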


\begin{proof}
It is clear that from definition 
\[
\left\Vert D^{k}K_{\varphi}\right\Vert _{\infty}\leq\left\Vert D^{k}K\right\Vert _{\infty}.
\]
In particular, if $K=G_{\delta}$, then 
\[
\left\Vert D^{k}K_{\varphi}\right\Vert _{\infty}\leq C_{k}\frac{1}{\delta^{2+k}}
\]
for some universal constants $C_{1},C_{2},\ldots$. Consider the linear
functional $f_{i}(h)=\int_{\mathbb{R}^{3}}F^{i}(x,t)h(x)dx$ where
$h$ defined on $\mathbb{R}^{3}$is smooth with a compact support,
i.e. a test function. Since 
\begin{equation}
F^{i}(x,t)=\sum_{n}A_{n}^{i}(t)\left(V^{j}(x,t)-V^{j}(X_{n}(t),t)\right)\frac{\partial\varphi}{\partial x^{j}}(x-X_{n}(t))\label{F-eq}
\end{equation}
so that 
\begin{align*}
&f_{i}(h) \\  =&-\sum_{n}A_{n}^{i}(t)\int_{\mathbb{R}^{3}}\left(V^{j}(x,t)-V^{j}(X_{n}(t),t)\right)\varphi(x-X_{n}(t))\frac{\partial h(x)}{\partial x^{j}}dx\\
   -&\int_{\mathbb{R}^{3}}\textrm{div}V(x,t)h(x)\varphi(x-X_{n}(t))\sum_{n}A_{n}^{i}(t)dx.
\end{align*}
By the definition for $V$ (see (\ref{sol-V-1})) one has the elementary
estimate (see the proof of Lemma \ref{lem4}) 
\[
|V^{j}(x,t)-V^{j}(y,t)|\leq2\left\Vert A(t)\right\Vert \left\Vert DK_{\varphi}\right\Vert |x-y|.
\]
Since 
\begin{align*}
\textrm{div}V(x,t) & =-\sum_{n}\varepsilon^{ikj}A_{n}^{k}(t)\frac{\partial}{\partial x^{i}}K_{\varphi}^{j}(x-X_{n}(t))\\
 & =\sum_{n}A_{n}^{k}(t)\nabla\wedge K_{\varphi}(x-X_{n}(t))
\end{align*}
so that 
\[
|\textrm{div}V(x,t)|\leq\left\Vert A(t)\right\Vert |\nabla\wedge K_{\varphi}(x-X_{n}(t))|.
\]
By using these estimates and the assumption that the support of $\varphi$
lies in the ball centered at $0$ with radius $r_{\varphi}$, we therefore
deduce that 
\begin{align*}
\left|f_{i}(h)\right| & \leq2\left\Vert A(t)\right\Vert ^{2}\left\Vert DK_{\varphi}\right\Vert r_{\varphi}\int_{\mathbb{R}^{3}}\varphi(x-X_{n}(t))|Dh(x)|dx\\
 & +\left\Vert A(t)\right\Vert ^{2}\int_{\mathbb{R}^{3}}|\nabla\wedge K_{\varphi}(x-X_{n}(t))||\varphi(x-X_{n}(t))|h(x)|dx\\
 & \leq2\left\Vert A(t)\right\Vert ^{2}\left\Vert DK_{\varphi}\right\Vert \left\Vert \varphi\right\Vert _{L^{p}}r_{\varphi}\left\Vert Dh\right\Vert _{L^{q}}\\
 & +\left\Vert A(t)\right\Vert ^{2}\left\Vert \varphi\nabla\wedge K_{\varphi}\right\Vert _{L^{p}}\left\Vert h\right\Vert _{L^{q}}
\end{align*}
where $p\geq1$ and $\frac{1}{p}+\frac{1}{q}=1$, for every test function
$h$ and $i=1,2,3$. Therefore 
\[
\left\Vert F(\cdot,t)\right\Vert _{W^{-1,p}}\leq6\left\Vert A(t)\right\Vert ^{2}\left(\left\Vert DK_{\varphi}\right\Vert \left\Vert \varphi\right\Vert _{L^{p}}r_{\varphi}+\left\Vert \varphi\nabla\wedge K_{\varphi}\right\Vert _{L^{p}}\right).
\]
The estimate for the error term $G^{i}$ is trivial. In fact by (\ref{G-5.10})
and (\ref{eq:V-d2}) to obtain 
\begin{align*}
&\left\Vert G^{i}(x,t)\right\Vert _{L^{p}} \\ \leq&\left\Vert \sum_{n}A_{n}^{j}(t)\left(\frac{\partial V^{i}}{\partial x^{j}}(X_{n}(t),t)-\frac{\partial V^{i}}{\partial x^{j}}(x,t)\right)\varphi(x-X_{n}(t))\right\Vert _{L^{p}}\\
  \leq&2\left\Vert A(t)\right\Vert ^{2}\left\Vert D^{2}K_{\varphi}\right\Vert _{\infty}\left\Vert \varphi\right\Vert _{L^{p}}r_{\varphi}
\end{align*}
In particular, if $K=G_{\delta}$, then $\nabla\wedge K_{\varphi}=0$,
so that 
\begin{align*}
\left\Vert F(\cdot,t)\right\Vert _{W^{-1,p}} & \leq6\left\Vert A(t)\right\Vert ^{2}\left\Vert DK_{\varphi}\right\Vert \left\Vert \varphi\right\Vert _{L^{p}}r_{\varphi}\\
 & \leq6C_{1}\frac{r_{\varphi}}{\delta^{3}}\left\Vert A(t)\right\Vert ^{2}\left\Vert \varphi\right\Vert _{L^{p}}
\end{align*}
and 
\[
\left\Vert G(x,t)\right\Vert _{L^{p}}\leq6C_{2}\frac{r_{\varphi}}{\delta^{4}}\left\Vert A(t)\right\Vert ^{2}\left\Vert \varphi\right\Vert _{L^{p}}
\]
The other conclusions follow from the following estimate: if $K=G_{\delta}$,
then $\left\Vert DK_{\varphi}\right\Vert \leq C_{1}\frac{1}{\delta^{3}}$,
thus by estimate (\ref{Aest3.7}) we obtain that 
\begin{align*}
\left\Vert A(t)\right\Vert  & \leq\left\Vert A(0)\right\Vert ^{2}e^{2\left\Vert DK_{\varphi}\right\Vert _{\infty}\left\Vert A(0)\right\Vert t}\\
 & \leq\left\Vert A(0)\right\Vert ^{2}e^{2C_{1}\frac{1}{\delta^{3}}\left\Vert A(0)\right\Vert t}\\
 & \leq\left\Vert A(0)\right\Vert ^{2}e^{C_{1}\frac{1}{\delta^{3}}\left\Vert A(0)\right\Vert }
\end{align*}
for all $t\in\left[0,\frac{1}{2}\tau_{A(0),\varphi}\wedge1\right]$. 
\end{proof}
We are now in a position to state a weak convergence theorem. Choose
$\delta>0$ so the solutions $\omega^{\delta}$ and $u^{\delta}$
solving (\ref{a-vort1}) and (\ref{a-vort2}) are good approximations
to the vorticity equations (\ref{vort-eq1}). Let $h>0$. Let $x_{n}$
where $n=(n^{1},n^{2},n^{3})\in\mathbb{Z}^{3}$ the center of the
lattice (open) box $B_{n}$ with size $h$ whose lower-left corner
is $hn$. Then $\omega_{0}$ may be approximated by $\sum_{n}\omega_{0}(x_{n})1_{B_{n}}$
in $L^{p}$ space for $p\geq1$. Let $W^{h}(x,0)=\sum_{n}A_{n,h}(0)\phi_{h}(x-x_{n})$
with $A_{n,h}(0)=\omega_{0}(x_{n})h^{3}$, where $n$ runs over all
$n$ such that $B_{n}\cap S\neq\textrm{Ã}$, and $\phi_{h}(x)=\frac{8}{h^{3}}\phi\left(\frac{2}{h}x\right)$,
where $\phi\geq0$ is chosen so that $\phi$ is an approximation of
$1_{B}$ in some Sobolev space, $\phi$ is smooth with a compact
support in $B$, where $B=(-1/2,1/2)^{3}$. Then 
\[
\left\Vert A_{,h}(0)\right\Vert =\sum_{n}|\omega_{0}(x_{n})|h^{3}\leq h|S|+\left\Vert \omega_{0}\right\Vert _{L^{1}}\leq|S|+\left\Vert \omega_{0}\right\Vert _{L^{1}}
\]
which is independent of $h\in(0,1)$ and the lattice size. With this
$W^{h}(x,0)$ as the initial sampling distribution, according to Corollary
\ref{cor5}, $V^{\delta,h},W^{\delta,h}$ defined by (\ref{sol-V-1})
and (\ref{sol-W-1}) with $K=G_{\delta}$ and $W(x,0)=W^{h}(x,0)$,
are approximations in mean of the vorticity equations (\ref{vort-eq1})
with initial vorticity $\omega_{0}$. Notice that 
\begin{align*}
K_{\phi_{h}}^{j}(x) & =\frac{8}{h^{3}}\int_{\mathbb{R}^{3}}K^{j}(y)\phi_{h}\left(\frac{2}{h}(x-y)\right)dy\\
 & =\frac{8}{h^{3}}\int_{\mathbb{R}^{3}}K^{j}(x-y)\phi_{h}\left(\frac{2}{h}y\right)dy\\
 & =\int_{|y|_{\infty}<\frac{1}{2}}K^{j}(x-\frac{h}{2}y)\phi\left(y\right)dy
\end{align*}
which yields that 
\[
\left\Vert D^{k}K_{\phi_{h}}\right\Vert _{\infty}\leq\left\Vert D^{k}K\right\Vert _{\infty}.
\]
Suppose $K=G_{\delta}$, then 
\[
\left\Vert D^{k}K_{\phi_{h}}\right\Vert _{\infty}\leq C_{k}\frac{1}{\delta^{2+k}}.
\]

\begin{thm}
Let $\delta>0$ and $h>0$. Let $\psi$, which is used to define $K=G_{\delta}$
for every $\delta>0$, and $\phi$, which is used to define the initial
$W^{h}(x,0)$ for every $h>0$, be two non-negative smooth functions
with supports lying inside the box $(-\frac{1}{2},\frac{1}{2})^{3}$
with $\int_{\mathbb{R}^{3}}\psi(x)dx=\int_{\mathbb{R}^{3}}\phi(x)dx=1$
as above. Let 
\begin{align}
&U^{\delta,h}(x,t)\nonumber\\=&W^{\delta,h}(x,t)-W^{h}(x,0)\nonumber\\-&\int_{0}^{t}\left(W^{\delta,h}\cdot\nabla V^{\delta,h}-V^{\delta,h}\cdot\nabla W^{\delta,h}+\nu\Delta W^{\delta,h}\right)ds.\label{U-eq1-1}
\end{align}
Then there is $T$ depending only on $|S|+\left\Vert \omega_{0}\right\Vert _{L^{1}}$
such that 
\[
\left\Vert \mathbb{E}\left[U^{\delta,h}(\cdot,t)\right]\right\Vert _{W^{-1,1}}\rightarrow0
\]
for all $t\in[0,T]$, as $h\downarrow0$ for any fixed $\delta>0$.
\end{thm}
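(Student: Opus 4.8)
The plan is to reduce the statement to the deterministic a priori bounds on the error terms $F$ and $G$ already proved in Lemma~\ref{lem7}, and then to observe that, for the concrete choice $\varphi=\phi_{h}$ and $K=G_{\delta}$, every quantity entering those bounds stays controlled uniformly in $h$ except the support radius $r_{\phi_{h}}$, which tends to $0$ as $h\downarrow0$. This shrinking radius is what drives the convergence.

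First I would read off the semimartingale decomposition of $U^{\delta,h}$. By Theorem~\ref{thm2} we have $V^{\delta,h}=K\star W^{\delta,h}$, so comparing the definition (\ref{U-eq1-1}) with the integrated form of (\ref{approx-vort1}) gives, exactly as in the proof of Corollary~\ref{cor5},
\[
U^{\delta,h,i}(x,t)=-\sqrt{2\nu}\int_{0}^{t}\frac{\partial W^{\delta,h,i}}{\partial x^{j}}\,dB^{j}(s)+\int_{0}^{t}\bigl(F^{i}+G^{i}\bigr)(x,s)\,ds .
\]
On the interval $[0,\tfrac12\tau_{A(0),\phi_{h}}]$ the integrand $\partial_{j}W^{\delta,h}$ is a finite sum of smooth compactly supported functions multiplied by the coefficients $A_{n}$, which are bounded there by (\ref{At-rem}); hence the stochastic integral is a genuine martingale and its expectation vanishes, leaving
\[
\mathbb{E}\bigl[U^{\delta,h,i}(x,t)\bigr]=\int_{0}^{t}\mathbb{E}\bigl[(F^{i}+G^{i})(x,s)\bigr]\,ds .
\]

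Next I would take the $W^{-1,1}$ norm in $x$ and carry it inside the time integral and the expectation via the Minkowski integral inequality and the convexity of the norm, so that
\[
\Vert\mathbb{E}[U^{\delta,h}(\cdot,t)]\Vert_{W^{-1,1}}\le\int_{0}^{t}\mathbb{E}\bigl[\Vert F(\cdot,s)\Vert_{W^{-1,1}}+\Vert G(\cdot,s)\Vert_{W^{-1,1}}\bigr]ds .
\]
Using the continuous embedding $\Vert G\Vert_{W^{-1,1}}\le\Vert G\Vert_{L^{1}}$ and the bounds of Lemma~\ref{lem7}, part~2, with $p=1$, $\Vert\phi_{h}\Vert_{L^{1}}=1$ and $\nabla\wedge K_{\phi_{h}}=0$ (because $G_{\delta}$ is a gradient), together with the $h$-uniform estimate $\Vert A^{h}(0)\Vert=\sum_{n}|\omega_{0}(x_{n})|h^{3}\le M:=|S|+\Vert\omega_{0}\Vert_{L^{1}}$, the right-hand side is bounded by
\[
t\,C_{2}\Bigl(\tfrac{1}{\delta^{3}}+\tfrac{1}{\delta^{4}}\Bigr)M^{2}e^{2C_{1}M/\delta^{3}}\,r_{\phi_{h}} .
\]
Since $\phi_{h}(x)=\frac{8}{h^{3}}\phi(\frac{2}{h}x)$ is supported in the cube of half-width $h/4$, one has $r_{\phi_{h}}=\frac{\sqrt3}{4}h\to0$, so for every fixed $\delta$ the whole expression tends to $0$ as $h\downarrow0$.

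Finally I would fix the time horizon, which is where the only real difficulty lies. All of the above is valid only for $t\in[0,\tfrac12\tau_{A(0),\phi_{h}}\wedge1]$, so I must guarantee that this interval does not collapse as $h\downarrow0$. Since $\Vert A^{h}(0)\Vert\le M$ and $\Vert DK_{\phi_{h}}\Vert_{\infty}\le C_{1}/\delta^{3}$ hold uniformly in $h$, the explosion-time estimate (\ref{expl-01}) gives $\tau_{A(0),\phi_{h}}\ge\delta^{3}/(C_{1}M)$ uniformly in $h$; choosing $T$ below $\tfrac12\min(\delta^{3}/(C_{1}M),1)$ then makes $[0,T]$ admissible for all $h$, with $T$ governed by $M=|S|+\Vert\omega_{0}\Vert_{L^{1}}$ for the fixed $\delta$. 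The main obstacle is exactly this uniformity in the discretization parameter $h$: one must verify both that the total vorticity mass $\Vert A^{h}(0)\Vert$ stays bounded independently of the lattice size and that the martingale term has mean zero on a common interval $[0,T]$. Once these are secured, convergence follows solely from the geometric factor $r_{\phi_{h}}\to0$.
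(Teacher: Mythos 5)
Your proposal is correct and follows essentially the same route as the paper, whose proof is the one-line remark that the theorem ``follows from the above estimates and Lemma \ref{lem7} item 2)'': you simply make explicit what that line compresses, namely the martingale-plus-drift decomposition of $U^{\delta,h}$ from Theorem \ref{thm2} and Corollary \ref{cor5}, the $h$-uniform bounds $\Vert A_{,h}(0)\Vert\leq|S|+\Vert\omega_{0}\Vert_{L^{1}}$ and $\Vert D^{k}K_{\phi_{h}}\Vert_{\infty}\leq C_{k}\delta^{-(2+k)}$, the vanishing of $\nabla\wedge K_{\phi_{h}}$ for $K=G_{\delta}$, the shrinking support radius $r_{\phi_{h}}\sim h$, and the uniform-in-$h$ lower bound $\tau_{A(0),\phi_{h}}\geq\delta^{3}/(C_{1}M)$ that fixes the common interval $[0,T]$. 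Your observation that $T$ in fact also depends on the fixed $\delta$ through this explosion-time bound is accurate and is a point the theorem's statement glosses over.
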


\begin{proof}
This follows from the above estimates and Lemma \ref{lem7} item 2). 
\end{proof}
While from the proof we can see that the error term tends to zero
as $\delta\downarrow0$ and $h\downarrow0$ such that $h\ll\delta^{4}e^{-C/\delta^{3}}$
(with some positive constant $C$). This means that we need to choose
the lattice size $h$ much smaller than the regularization $\delta$
in order to ensure the convergence result. We should point out that this estimate is
very crude due to the lack of a priori estimates for solutions to the Navier-Stokes equations. The simulations in section 7 demonstrate that the convergence rate of our vortex method is much fast even for the case where $\delta=0$, but its proof is beyond the reach of the current mathematical analysis for these partial differential equations.

\section{Modified random vortex dynamics}

In the previous section, we have shown our random vortex system (\ref{eq:X-n eq2},
\ref{eq:A-n eq2}) converges to the vorticity equations in mean, while
if the viscosity $\nu>0$ is not small, then the random perturbation
term, i.e. the noise part involving Brownian motion, appearing in
(\ref{approx-vort1}) may be not small although its mean always stays
zero. To make the noise as small as possible, we may split each particle into $N$ copies of the same particle and apply independent copies of Brownian motion for these particles. Hence we propose the following SDEs
\begin{align}
dX_{n,a}^{i}(t)&=-\sum_{m,b}\varepsilon^{ikj}A_{m,a}^{k}(t)K_{\varphi}^{j}(X_{n,a}(t)-X_{m,b}(t))dt\nonumber \\ &+\sqrt{2\nu}dB_{n,a}^{i}(t),\nonumber\\ X_{n,a}(0)&=x_{n},\label{eq:X-n eq2-1}
\end{align}
and 
\begin{align}
&dA_{n,a}^{i}(t)=A_{n,a}^{l}(t)\sum_{m,b}\varepsilon^{ijk}\frac{\partial K_{\varphi}^{j}}{\partial x^{l}}(X_{n,a}(t)-X_{m,b}(t))A_{m,b}^{k}(t)dt,\nonumber\\ &A_{n,a}(0)=\frac{1}{N}A_{n}(0),\label{eq:A-n eq2-1}
\end{align}
where $a,b$ runs from $1$ up to $N$, where $N$ is a fixed natural
number, and $B_{n,a}$ are independent copies of 3D Brownian motion
on some probability space $(\varOmega,\mathcal{F},\mathbb{P})$. Thus
we have split the distribution of the initial vortices as 
\begin{align*}
W(x,0) & =\sum_{n}A_{n}(0)\varphi(x-x_{n})\\
 & =\sum_{n,a}A_{n,a}(0)\varphi(x-x_{n})
\end{align*}
and we have used independent Brownian motions for different locations
to reduce the eventual noise in the approximation vorticity equation
(\ref{approx-vort1}). $W(x,t)$ and $V(x,t)$ are still defined in
terms of (\ref{sol-V-1}) and (\ref{sol-W-1}), where the only modifications
one has to make are the sums over $n,a$ rather than $n$. The approximation
voticity equations (\ref{approx-vort1}) has the same form but with
different noise term: 
\[
dW^{i}=\left(W^{j}\frac{\partial V^{i}}{\partial x^{j}}-V^{j}\frac{\partial W^{i}}{\partial x^{j}}+\nu\Delta W^{i}\right)dt+H^{i}+\left(F^{i}+G^{i}\right)dt
\]
where $i=1,2,3$ and 
\[
H(x,t)=-\sqrt{2\nu}\sum_{n,a}A_{n,a}(t)\cdot\nabla\varphi(x-X_{n,a}(t))dB_{n,a}(t)
\]
which can be verified by using It\^{o}'s formula too. Hence 
\begin{align*}
&\mathbb{E}\left|\int_{0}^{t}dH(x,s)\right|^{2} \\ =&2\nu\frac{1}{N}\sum_{n}\int_{0}^{t}|A_{n}(s)|^{2}|\nabla\varphi(x-X_{n,a}(s))|^{2}ds\\
   \leq&\frac{4\nu}{N}\left\Vert \nabla\varphi\right\Vert _{\infty}^{2}\left\Vert A(0)\right\Vert \max_{n}|A_{n}(0)|e^{2\left\Vert DK_{\varphi}\right\Vert _{\infty}\left\Vert A(0)\right\Vert t}
\end{align*}
for $t\leq\frac{1}{2}\tau_{A(0),\varphi}$, which goes to zero as
$N\rightarrow\infty$. The error term estimates for $F$ and $G$
remain the same which are independent of $N$. Therefore
\[
\left\Vert\mathbb{E}\left[ |U^{\delta,h}(\cdot,t)|^2\right]\right\Vert _{W^{-1,1}}\rightarrow0
\]
for all $t\in[0,T]$, as $h\downarrow0$ for any fixed $\delta>0$.

\section{2D case}

We retain the same notations as used in the previous sections, but
we work with the two dimensional vorticity equation instead. 2D case
is special as we have indicated -- there is no non-linear stretching
term in the vorticity equation. The vorticity $\omega(x,t)$
of an incompressible fluid flow with velocity $u=(u^{1},u^{2})$ can
be identified with the scalar function $\frac{\partial}{\partial x^{1}}u^{2}-\frac{\partial}{\partial x^{2}}u^{1}$,
and the vorticity equation 
\begin{equation}
\frac{\partial}{\partial t}\omega+u^{j}\frac{\partial}{\partial x^{j}}\omega=\nu\Delta\omega\label{2D-vort1}
\end{equation}
appears as a ``linear'' parabolic equation. Hence the initial
vortices are transported along the motion of Brownian fluid particles.
Since $\textrm{div}u=0$, so that
\begin{equation}
\Delta u^{1}=-\frac{\partial}{\partial x^{2}}\omega,\quad\Delta u^{2}=\frac{\partial}{\partial x^{1}}\omega\label{2D-vort2}
\end{equation}
and
\[
u(x,t)=\int_{\mathbb{R}^{2}}G(x-y)\omega(y,t)dy
\]
where in 2D case, the Biot-Savart kernel $G(x)=\frac{1}{2\pi}(-\frac{x_{2}}{|x|^{2}},\frac{x_{1}}{|x|^{2}})$
which is singular at the original $0$. We therefore need to treat
it as for the 3D case to introduce $G_{\delta}(x)=G\ast\psi_{\delta}(x)$,
where $\psi_{\delta}(x)=\delta^{-2}\psi(\delta^{-1}x)$, and $\psi(x)$
is a smooth function with total integral $1$ and a compact support
lying inside $(-1/2,1/2)^{2}$. Then we have $\lVert DG_{\delta}(x)\rVert\leq C\frac{1}{\delta^{2}}$,
where $C$ depends only on the regularization $\psi$. We then implement
the vortex method to the approximation system \eqref{2D-vort1} together
with the integral relation
\begin{equation}
u(x,t)=\int_{\mathbb{R}^{2}}G_{\delta}(x-y)\omega(y,t)dy\label{2D-a-vort2}
\end{equation}
for each fixed $\delta>0$. Since $\omega(x,t)$ is scalar, so that
there is a simplification when we sample the distribution of initial
vortices. As for the 3D case, $(x_{n})$ is a finite collection of
points in $\mathbb{R}^{2}$ and the initial vortices $A_{n}(0)$ are
scalars. The dynamic equation for $A_{n}(t)$ is no longer needed
as $A_{n}(t)=A_{n}(0)=A_{n}$ are independent of $t$, and the dynamic
equation for $X_{n}(t)$ is reduced to the following SDE 
\begin{align}
&dX_{n}^{i}(t)=\sum_{m}A_{m}K_{\varphi}^{i}(X_{n}(t)-X_{m}(t)))dt+\sqrt{2\nu}dB^{i}(t),\nonumber\\&X_{n}(0)=x_{n},\label{eq: 2D-2}
\end{align}
whose coefficients are globally Lipschitz, so that it has a unique
strong solution $X_{n}(t)$ defined for all $t$. Let 
\begin{equation}
W(x,t)=\sum_{n}A_{n}\varphi(x-X_{n}(t))\label{eq:2Dvor}
\end{equation}
and 
\begin{equation}
V^{i}(x,t)=\sum_{n}A_{n}K_{\varphi}^{i}(x-X_{n}(t))\label{eq:2Dvel}
\end{equation}

By using It\^o's formula one may deduce that 
\[
\begin{aligned}&dW(x,t) \\ =&-V(x,t)\cdot\nabla W(x,t)dt+\nu\Delta W(x,t)dt\\-&\sum_{n}A_{n}\nabla\varphi(x-X_{n}(t))\cdot\sqrt{2\nu}dB(t)\\
  -&\sum_{n}A_{n}\nabla\varphi(x-X_{n}(t))\cdot(V(X_{n}(t),t)-V(x,t))dt.
\end{aligned}
\]
The non-martingale error term 
\[
F(x,t)=\sum_{n}A_{n}\nabla\varphi(x-X_{n}(t))\cdot(V(X_{n}(t),t)-V(x,t)).
\]
Following the same procedure as in the proof of Lemma \ref{lem7},
we may deduce the following global estimates for 2D case. 
\begin{lem}
\label{lem: 2d} 1) For any $p\geq1$ 
\[
\lVert F\rVert_{W^{-1,p}}\leq\lVert A\rVert^{2}\bigg(\lVert DK_{\varphi}\rVert\lVert\varphi\rVert_{L^{p}}r_{\varphi}+\lVert\varphi\text{\text{div}}K_{\varphi}\rVert_{L^{p}}\bigg).
\]
2) If the singular kernel $K=G_{\delta}$, where $\delta>0$, then
\[
\lVert F(\cdot,t)\rVert_{W^{-1,p}}\leq C_{5}\frac{1}{\delta^{2}}\lVert A\rVert^{2}\lVert\varphi\rVert_{L^{p}}r_{\varphi}
\]
for all $t>0$ where $\lVert A\rVert=\sum_{n}|A_{n}|$. 
\end{lem}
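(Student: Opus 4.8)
The plan is to mirror the duality argument in the proof of Lemma \ref{lem7}, exploiting the two features that make the 2D setting cleaner. First, the vortices $A_n=A_n(0)$ are \emph{scalars} that are constant in $t$, so $\lVert A(t)\rVert=\lVert A\rVert$ for all $t$ and the SDE (\ref{eq: 2D-2}) has globally Lipschitz coefficients; hence $X_n$ is defined for all time (no explosion) and all estimates are automatically global in $t$, which is why part 2) is asserted ``for all $t>0$'' with no analogue of the $[0,\tfrac12\tau_{A(0),\varphi}]$ restriction of the 3D lemma. Second, the velocity (\ref{eq:2Dvel}) is a plain scalar-weighted convolution, so the Lipschitz bound $\lvert V(x,t)-V(y,t)\rvert\le\lVert A\rVert\,\lVert DK_\varphi\rVert\,\lvert x-y\rvert$ carries no cross-product factor, which accounts for the absence of the numerical prefactor $6$ present in Lemma \ref{lem7}.

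First I would test $F(\cdot,t)$ against an arbitrary $h\in W^{1,q}$ with $\tfrac1p+\tfrac1q=1$, write $\frac{\partial\varphi}{\partial x^j}(x-X_n(t))=\frac{\partial}{\partial x^j}\varphi(x-X_n(t))$, and integrate by parts to move the gradient off $\varphi$. Differentiating the product $\big(V^j(X_n(t),t)-V^j(x,t)\big)h(x)$ produces a divergence term and a gradient-of-$h$ term, so that $\int_{\mathbb{R}^2}F\,h\,dx=I_1-I_2$ with
\[
I_1=\sum_n A_n\int \varphi(x-X_n(t))\,\mathrm{div}V(x,t)\,h(x)\,dx
\]
and
\[
I_2=\sum_n A_n\int \varphi(x-X_n(t))\big(V^j(X_n(t),t)-V^j(x,t)\big)\frac{\partial h}{\partial x^j}(x)\,dx.
\]
For $I_2$ I would insert the Lipschitz bound together with the support condition $\lvert x-X_n(t)\rvert\le r_\varphi$ valid on the support of $\varphi(\cdot-X_n(t))$, then apply Hölder and $\sum_n\lvert A_n\rvert=\lVert A\rVert$ to get $\lvert I_2\rvert\le\lVert A\rVert^2\lVert DK_\varphi\rVert\,r_\varphi\,\lVert\varphi\rVert_{L^p}\lVert Dh\rVert_{L^q}$. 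For $I_1$ I would use $\mathrm{div}V(x,t)=\sum_m A_m\,\mathrm{div}K_\varphi(x-X_m(t))$ and localize against $\varphi(\cdot-X_n(t))$ to pair $\varphi$ and $\mathrm{div}K_\varphi$ at a common centre, yielding $\lvert I_1\rvert\le\lVert A\rVert^2\lVert\varphi\,\mathrm{div}K_\varphi\rVert_{L^p}\lVert h\rVert_{L^q}$. Bounding $\lVert h\rVert_{L^q}$ and $\lVert Dh\rVert_{L^q}$ by $\lVert h\rVert_{W^{1,q}}$ and taking the supremum over $\lVert h\rVert_{W^{1,q}}\le1$ gives part 1).

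Part 2) is then immediate: for $K=G_\delta$ the 2D Biot--Savart kernel is divergence free, since a direct computation gives $\mathrm{div}G=0$ and convolution commutes with the divergence, so $\mathrm{div}K_\varphi=(\mathrm{div}G_\delta)\star\varphi=0$. Hence the entire $I_1$ contribution drops out and only the $I_2$ estimate survives; controlling it through $\lVert DK_\varphi\rVert\le\lVert DG_\delta\rVert\le C\delta^{-2}$ produces the stated bound $C_5\,\delta^{-2}\lVert A\rVert^2\lVert\varphi\rVert_{L^p}r_\varphi$, uniformly in $t$ because $A$ is time-independent.

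The step I expect to be most delicate is the treatment of $I_1$, namely extracting the product norm $\lVert\varphi\,\mathrm{div}K_\varphi\rVert_{L^p}$ from the double sum $\sum_{n,m}A_nA_m\int\varphi(x-X_n)\,\mathrm{div}K_\varphi(x-X_m)\,h$: the off-diagonal pairs $m\neq n$ do not localize, since the mollified kernel has global support, so the clean product form rests on the same dominant-term bookkeeping that was used for $\nabla\wedge K_\varphi$ in Lemma \ref{lem7}. This subtlety is, however, harmless for all the applications, because in every case of interest one takes $K=G_\delta$ and the term vanishes identically.
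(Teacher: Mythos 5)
Your proposal is correct and takes essentially the same route as the paper, which proves this lemma simply by asserting it follows ``the same procedure as in the proof of Lemma \ref{lem7}'': your duality decomposition into the divergence term $I_1$ and the Lipschitz term $I_2$, the H\"older step, and the observation that $\mathrm{div}\,K_\varphi=0$ for $K=G_\delta$ combined with $\lVert DG_\delta\rVert\le C\delta^{-2}$ (so the estimate is global in $t$ since $A_n$ is constant) is exactly the intended argument, including the correct disappearance of the factor $6$. The off-diagonal localization subtlety you flag in $I_1$ is genuine but is equally present in the paper's own proof of Lemma \ref{lem7}, where $|\mathrm{div}V(x,t)|$ is bounded by the single-centre quantity $\lVert A(t)\rVert\,|\nabla\wedge K_\varphi(x-X_n(t))|$, so your bookkeeping matches the paper's and, as you note, the term vanishes identically in the only case used.
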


Now we state the weak convergence theorem. Let $h>0$, and let $x_{n}$
where $n=(n_{1},n_{2})\in\mathbb{Z}^{2}$ the center of the lattice
box $B_{n}$ with size $h$ whose lower left corner is $hn$. Then
the initial vorticity $\omega_{0}$ may be approximated by $\sum_{n}\omega_{0}(x_{n})1_{B_{n}}$
for example in an $L^{p}$-space. Let $W^{h}(x,0)=\sum_{n}A_{n,h}\varphi_{h}(x-x_{n})$
with $A_{n,h}=\omega_{0}(x_{n})h^{2}$, where $n$ runs over all $n$
such that $B_{n}\cap S\neq\tilde{A}$, and $\varphi_{h}(x)=\frac{1}{h^{2}}\phi(\frac{x}{h})$,
where $\phi$ is an approximation of $1_{B}$ where $B$ is the unit
square $(-1/2,1/2)^{2}$ in $L^{p}$ space for $p>1$. Then 
\[
\lVert A_{,h}\rVert=\sum_{n}|\omega_{0}(x_{n})|h^{2}\leq h|S|+\lVert\omega_{0}\rVert_{L^{1}}\leq|S|+\lVert\omega_{0}\rVert_{L^{1}},
\]
which is independent of $h\in(0,1)$ and the lattice size. With this
$W^{h}(x,0)$ as the initial sampling distribution, let $V^{\delta,h},W^{\delta,h}$
be defined as in equations (\ref{eq:2Dvel}) and (\ref{eq:2Dvor})
with $K=G_{\delta}$. Therefore, by Lemma \ref{lem: 2d}, item 2), we have
the following theorem: 
\begin{thm}
\label{thm2D}Let 
\begin{align}
U^{\delta,h}(x,t)&=W^{\delta,h}(x,t)-W^{h}(x,0)\nonumber \\&-\int_{0}^{t}\left(-V^{\delta,h}\cdot\nabla W^{\delta,h}+\nu\Delta W^{\delta,h}\right)ds.\label{U-eq1-1}
\end{align}
Then
\[
\left\Vert \mathbb{E}\left[U^{\delta,h}(\cdot,t)\right]\right\Vert _{W^{-1,1}}\rightarrow0
\]
for all $t>0$ as long as $h\ll\delta^{2}$, and for any $T>0$, this
convergence is uniform for all $(x,t)\in\mathbb{R}^{2}\times(0,T]$. 
\end{thm}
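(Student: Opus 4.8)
The plan is to integrate the It\^o identity for $W^{\delta,h}$ displayed above, pass to expectations so that the stochastic integral is annihilated, and then bound the surviving drift discrepancy by the \emph{global} estimate of Lemma \ref{lem: 2d}.

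First I would integrate the It\^o formula for $dW$ from $0$ to $t$. Recalling that $F$ is the non-martingale error term, this reads
\begin{align*}
W^{\delta,h}(x,t)-W^{h}(x,0) &=\int_{0}^{t}\left(-V^{\delta,h}\cdot\nabla W^{\delta,h}+\nu\Delta W^{\delta,h}\right)ds\\
&\quad-\sqrt{2\nu}\int_{0}^{t}\sum_{n}A_{n}\nabla\varphi_{h}(x-X_{n}(s))\cdot dB(s)-\int_{0}^{t}F(x,s)\,ds,
\end{align*}
so that $U^{\delta,h}(x,t)$ equals the stochastic integral plus $-\int_{0}^{t}F(x,s)\,ds$. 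The stochastic integral is a genuine martingale: in the 2D regime the coefficients of (\ref{eq: 2D-2}) are globally Lipschitz, hence $X_{n}$ exists for all $t$, the $A_{n}$ are fixed constants, and $\nabla\varphi_{h}$ is bounded with compact support, so the integrand is bounded and the It\^o integral has zero mean. Taking expectations and applying Fubini gives
\[
\mathbb{E}\left[U^{\delta,h}(x,t)\right]=-\int_{0}^{t}\mathbb{E}\left[F(x,s)\right]ds.
\]

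Next I would estimate in the $W^{-1,1}$ norm. By Minkowski's integral inequality followed by convexity of the norm,
\[
\left\Vert \mathbb{E}\left[U^{\delta,h}(\cdot,t)\right]\right\Vert _{W^{-1,1}}\leq\int_{0}^{t}\left\Vert \mathbb{E}\left[F(\cdot,s)\right]\right\Vert _{W^{-1,1}}ds\leq\int_{0}^{t}\mathbb{E}\left\Vert F(\cdot,s)\right\Vert _{W^{-1,1}}ds.
\]
Now apply Lemma \ref{lem: 2d}, item 2), with $p=1$ and $K=G_{\delta}$. Because the $A_{n}$ are constant in time, that bound is time-independent and deterministic, namely $\left\Vert F(\cdot,s)\right\Vert _{W^{-1,1}}\leq C_{5}\delta^{-2}\left\Vert A_{,h}\right\Vert ^{2}\left\Vert \varphi_{h}\right\Vert _{L^{1}}r_{\varphi_{h}}$. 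Inserting $\left\Vert \varphi_{h}\right\Vert _{L^{1}}=1$, the support radius $r_{\varphi_{h}}\leq c\,h$, and the uniform bound $\left\Vert A_{,h}\right\Vert \leq|S|+\left\Vert \omega_{0}\right\Vert _{L^{1}}$ established just before the theorem, I obtain
\[
\left\Vert \mathbb{E}\left[U^{\delta,h}(\cdot,t)\right]\right\Vert _{W^{-1,1}}\leq C\,t\,\frac{h}{\delta^{2}}\left(|S|+\left\Vert \omega_{0}\right\Vert _{L^{1}}\right)^{2}.
\]
This tends to $0$ as $h\downarrow0$ for fixed $\delta$, and more generally whenever $h\ll\delta^{2}$; since on $(0,T]$ the right-hand side is at most $C\,T\,h\delta^{-2}(\cdots)^{2}$, the convergence is uniform in $t\in(0,T]$.

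Finally, the point worth stressing — and what makes the 2D statement so much stronger than its 3D analogue — is that here there is no real obstacle, precisely because vortex stretching is absent. Since $A_{n}(t)\equiv A_{n}(0)$, the estimate of Lemma \ref{lem: 2d} carries no Gronwall-type exponential factor $e^{C/\delta^{3}}$ and no finite explosion time, so the bound on $\left\Vert F\right\Vert _{W^{-1,1}}$ is uniform over all of $[0,\infty)$. Consequently the only quantity to monitor is the ratio $h/\delta^{2}$, and the convergence is global in time rather than confined to $[0,\tfrac{1}{2}\tau_{A(0),\varphi}]$. The only mild care needed is to confirm the martingale property so that the noise term drops out under $\mathbb{E}$, which is immediate from the global-in-time existence and the boundedness of $\nabla\varphi_{h}$.
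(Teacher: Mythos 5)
Your proposal is correct and follows essentially the same route as the paper: the paper proves Theorem~\ref{thm2D} by exactly this one-line deduction from Lemma~\ref{lem: 2d}, item 2), combined with the $h$-independent bound $\lVert A_{,h}\rVert\leq|S|+\lVert\omega_{0}\rVert_{L^{1}}$ established just before the theorem, the martingale term vanishing under expectation as in Corollary~\ref{cor5}. Your write-up merely makes explicit the routine steps (It\^{o} integration, the true-martingale verification via boundedness of $\nabla\varphi_{h}$, and the substitutions $\lVert\varphi_{h}\rVert_{L^{1}}=1$, $r_{\varphi_{h}}\leq ch$) that the paper leaves implicit, and correctly identifies the absence of vortex stretching as the reason the bound is global in time.
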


\section{Simulation results}

First we show some figures of the velocity field by our approximation scheme for 3D flows, which illustrates the velocity field projected at the plane $z=0$.
The flow starts with two points at $(-1,0,0)$ and $(1,1,0)$ with vorticity $(0,0,10)$ and $(10,10,10)$ respectively. i.e $x_1=(-1,0,0), x_2 = (1,1,0)$, $,A_1(0) = (0,0,10),A_2(0)=(10,10,10)$. Throughout the simulation in 3D flows, We choose the Biot-Savart kernel $G$, and we let $\varphi$  be $c(|x|^2-1)^2$, where $c$ is a constant such that the total integral of $\varphi = 1$ (We find a polynomial helps improve the computational speed of numerical integration in the packages we used in program). We let $h=0.5$. Figure.~\ref{fig:1} shows the case when $\nu=0$.

\begin{figure}
\centering
    \begin{subfigure}[t]{0.2\textwidth}
        \centering
        \includegraphics[width=1\linewidth]{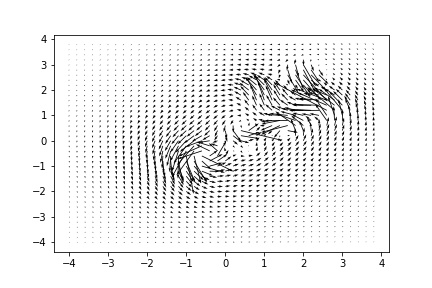} 
        \caption{$t=0$} 
    \end{subfigure} 
    \begin{subfigure}[t]{0.2\textwidth}
        \centering
        \includegraphics[width=1\linewidth]{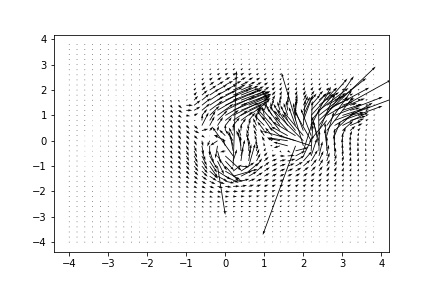} 
        \caption{$t=1$}
    \end{subfigure}
    \begin{subfigure}[t]{0.2\textwidth}
        \centering
        \includegraphics[width=1\linewidth]{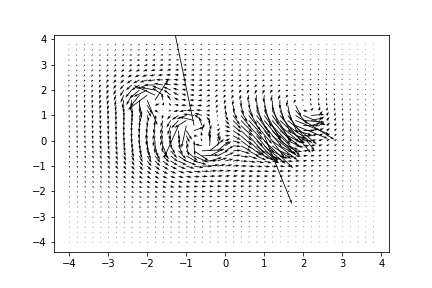} 
        \caption{$t=3$} 
    \end{subfigure}
    \begin{subfigure}[t]{0.2\textwidth}
           \centering
        \includegraphics[width=1\linewidth]{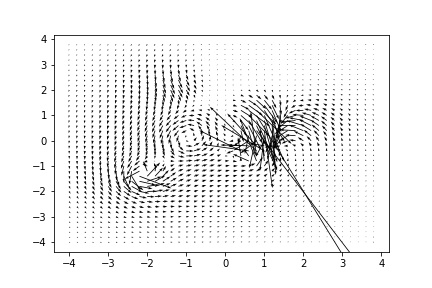} 
        \caption{$t=5$} 
    \end{subfigure}
    \caption{An inviscid Flow \label{fig:1}}

\end{figure}

\begin{figure}
    \centering
    \begin{subfigure}[h]{0.2\textwidth}
        \centering
        \includegraphics[width=1\linewidth]{lesssplit1.jpg} 
        \caption{} 
    \end{subfigure}
    \begin{subfigure}[h]{0.2\textwidth}
        \centering
        \includegraphics[width=1\linewidth]{lesssplit2.jpg} 
        \caption{}
    \end{subfigure}
    
    \begin{subfigure}[h]{0.2\textwidth}
        \centering
        \includegraphics[width=1\linewidth]{lesssplit3.jpg} 
        \caption{} 
    \end{subfigure}
    \begin{subfigure}[h]{0.2\textwidth}
           \centering
        \includegraphics[width=1\linewidth]{lesssplit4.jpg} 
        \caption{} 
    \end{subfigure}
    \caption{Split each particle into $5$ parts\label{fig:2}}

\end{figure}

\begin{figure}
\centering
    \begin{subfigure}[h]{0.2\textwidth}
        \centering
        \includegraphics[width=1\linewidth]{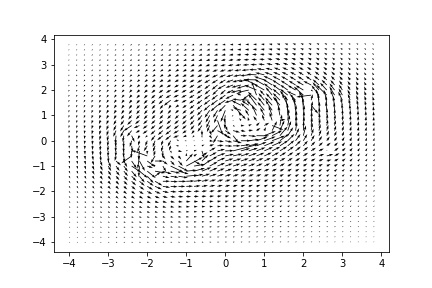} 
        \caption{} 
    \end{subfigure}
    \begin{subfigure}[h]{0.2\textwidth}
        \centering
        \includegraphics[width=1\linewidth]{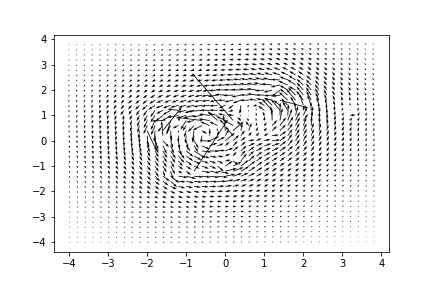} 
        \caption{}
    \end{subfigure}
    \begin{subfigure}[h]{0.2\textwidth}
        \centering
        \includegraphics[width=1\linewidth]{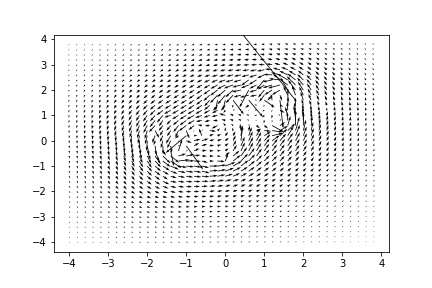} 
        \caption{} 
    \end{subfigure}
    \begin{subfigure}[h]{0.2\textwidth}
           \centering
        \includegraphics[width=1\linewidth]{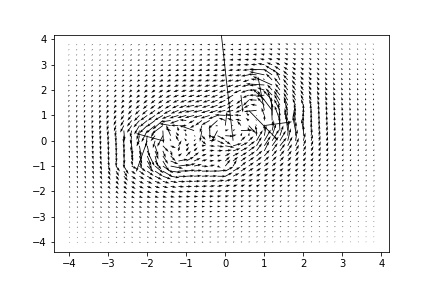} 
        \caption{} 
    \end{subfigure}
    \caption{Split each particle into $50$ parts\label{fig:3}}

\end{figure}

When $\nu\neq 0$, we are going to split the particles as in the previous section, and we will see that more splitting helps minimise the stochastic error. We keep the choice of the kernel, $\varphi$, $h$, the initial condition as in the inviscid flow.  The only difference here is that we consider the viscosity $\nu =1$ in this case, which means we have to solve a system of SDEs with the appropriate noise. We are looking at the projection of velocity field when $t = 0.5$. (Figure.~\ref{fig:2} and Figure.~\ref{fig:3}). When we split each particle into $50$ parts, the result from different randomisation is more 
consistent.

For 2D flows simulation, we use the 2D Biot-Savart kernel $H$, and $\varphi =c(|x|^2-1)^2$, $h=0.5$.  we start the flow with randomized $(x_n,A_n)$ $n =1,2,\dots, 10$. $x_n$ are sampled in $[-1,1]\times[-1,1]$ by uniform distribution independently, and $A_n$ are sampled in $[-10,-10]$ by uniform distribution. The particles inside the flow are denoted by different colors, and then we look at their locations at different time. (Figure.~\ref{fig:4})
\begin{figure}
    \centering
    \begin{subfigure}[h]{0.2\textwidth}
        \centering
        \includegraphics[width=1\linewidth]{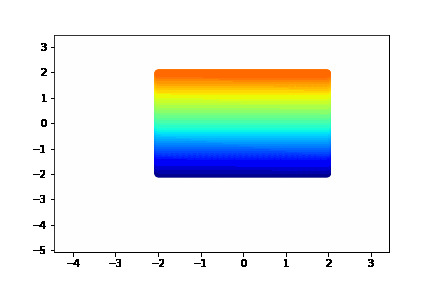} 
        \caption{$t=0$} 
    \end{subfigure}
    \begin{subfigure}[h]{0.2\textwidth}
        \centering
        \includegraphics[width=1\linewidth]{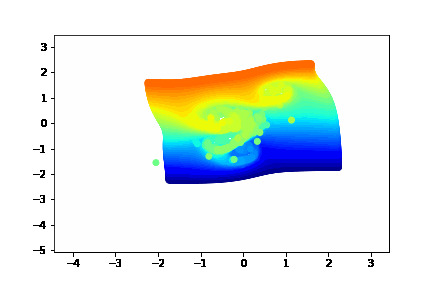}
        \caption{$t=0.3$}
    \end{subfigure}
    \begin{subfigure}[h]{0.2\textwidth}
        \centering
        \includegraphics[width=1\linewidth]{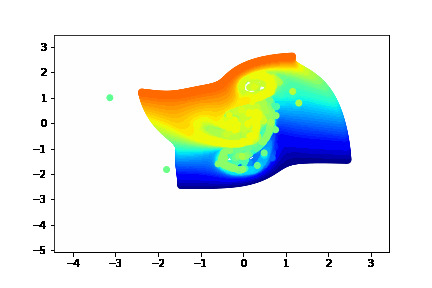}
        \caption{$t=0.6$} 
    \end{subfigure}
    \begin{subfigure}[h]{0.2\textwidth}
           \centering
        \includegraphics[width=1\linewidth]{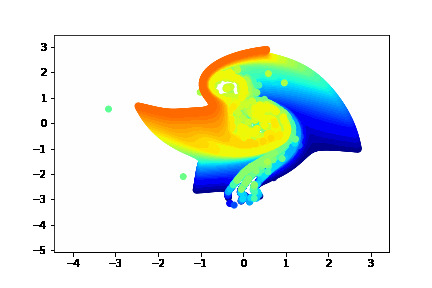}
        \caption{$t=1$} 
    \end{subfigure}
    \caption{The dynamics of an inviscid flow\label{fig:4}}

\end{figure}

Figure.~\ref{fig:5} is the vorticity picture for a non-inviscid flow with viscosity $\nu=1$. We use the same kernel, $\varphi$, $h$ as in the previous simulation. We let $(x_1,x_2,x_3,x_4)$ be sampled in $\mathbb{R}^2$ by $N(0,I)$, with vorticity $(10,10,-10,-10)$ respectively. We split each particle into $50$ parts to minimise the stochastic error. The vorticity value at each point is represented by its color, and the relations between the color and vorticity are shown in the bar on the right of each figure. The blue color represents positive vorticity, and the red color represents negative vorticity.
\begin{figure}
\centering
    \begin{subfigure}[h]{0.2\textwidth}
        \centering
        \includegraphics[width=1\linewidth]{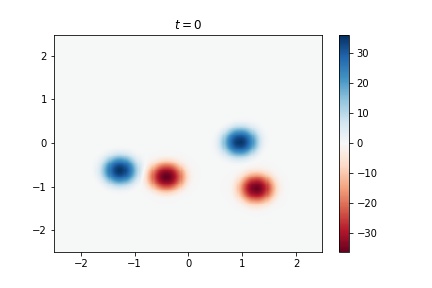}  
    \end{subfigure}
    \begin{subfigure}[h]{0.2\textwidth}
        \centering
        \includegraphics[width=1\linewidth]{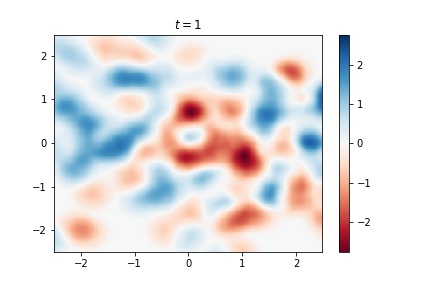} 
    \end{subfigure}
    \begin{subfigure}[h]{0.2\textwidth}
        \centering
        \includegraphics[width=1\linewidth]{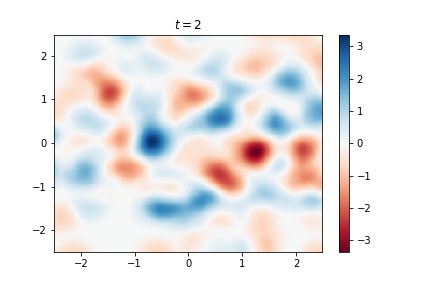} 
    \end{subfigure}
    \begin{subfigure}[h]{0.2\textwidth}
           \centering
        \includegraphics[width=1\linewidth]{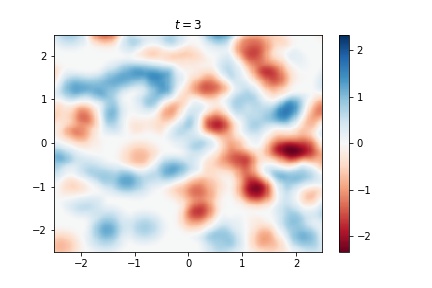} 
    \end{subfigure}
    \caption{Vorticity of a non-inviscid flow\label{fig:5}}

\end{figure}

\section*{Data Availability Statement}

The data that support the findings of this study are available from the corresponding author upon reasonable request.

\begin{acknowledgments}
This publication is based on work partially supported by the EPSRC Centre for Doctoral Training in Mathematics of Random Systems: Analysis, Modelling and Simulation (EP/S023925/1)
\end{acknowledgments}

\nocite{*}
\bibliography{bibliography}

\end{document}